\documentclass[11pt]{amsart}
\UseRawInputEncoding

\usepackage{amsmath, amssymb, amsthm, amscd}
\usepackage{url}
\usepackage{graphicx}
\usepackage[hidelinks,pagebackref,pdftex]{hyperref}
\usepackage{color}
\usepackage{import}
\usepackage[letterpaper, margin=1 in]{geometry}
\usepackage{caption}
\usepackage{bbold}
\usepackage{subcaption}

\usepackage[ruled,vlined, linesnumbered]{algorithm2e} 
\SetKwInput{KwRequire}{Require}

\usepackage{amsmath}
\usepackage{old-arrows}
\usepackage{comment}
\usepackage{multirow}
\usepackage{tikz-cd}
\usepackage{enumitem}
\usepackage{stmaryrd}
\usepackage{epigraph}
\usepackage{braket}
\newcommand{\Z}{\mathbb{Z}}

\newcommand{\R}{\mathbb{R}}
\newcommand{\C}{\mathbb{C}}

\newcommand{\eg}{{\em e.g.}}
\newcommand{\ie}{{\em i.e.}}

\DeclareMathOperator{\SU}{SU}

\DeclareMathOperator{\tr}{tr}

\DeclareMathOperator{\diag}{diag}
\DeclareMathOperator{\su}{\mathfrak{su}}
\newcommand{\calG}{\mathcal{G}}

\DeclareMathOperator{\poly}{poly}

\usepackage{marginnote}
\long\def\@savemarbox#1#2{\global\setbox#1\vtop{\hsize\marginparwidth 
  \@parboxrestore\tiny\raggedright #2}}
\renewcommand*{\backref}[1]{}
\renewcommand*{\backrefalt}[4]{
  \ifcase #1
  [No citations.]
  \or [#2]
  \else [#2]
  \fi }

\AtBeginDocument{%
   \def\MR#1{}
}

\numberwithin{equation}{section}
\theoremstyle{plain}
\newtheorem{theorem}[equation]{Theorem}

\newtheorem{lemma}[equation]{Lemma}
\newtheorem{corollary}[equation]{Corollary}
\newtheorem{proposition}[equation]{Proposition}

\newtheorem*{namedtheorem}{\theoremname}
\newcommand{\theoremname}{testing}

\theoremstyle{definition}
\newtheorem{definition}[equation]{Definition}

\newtheorem{remark}[equation]{Remark}

\title[Removing Online Exponential Net Search from Solovay-Kitaev]{Removing Online Exponential Net Search from Solovay-Kitaev}

\author{Henrique Ennes}\thanks{Universit\'e C\^ote d'Azur, Inria, CNRS, I3S, Sophia Antipolis, France, \texttt{henrique.lovisi-ennes@inria.fr}.}
\author{Cl\'ement Maria}\thanks{Inria Center d'Universit\'e C\^ote d'Azur, Sophia Antipolis, France, \texttt{clement.maria@inria.fr}.}
\begin{document}

\pagenumbering{gobble}
\begin{abstract}
The Solovay--Kitaev algorithm describes how to approximate, to arbitrary precision, a matrix in the special unitary group $\SU(d)$ using any fixed universal gate set.
Although the algorithm scales as $O(\poly(\log(1/\varepsilon)))$, where $\varepsilon$ is the maximum targeted approximation error, its running time depends exponentially on the qudit dimension $d$.
This bad dependence can be traced to its explicit use of an $\varepsilon_0$-net of size $2^{\Omega(d^2)}$, which is queried $O(\poly(\log(1/\varepsilon)))$ times throughout the execution.
For this reason, the standard Solovay--Kitaev theorem is usually stated for fixed $d$, with the base net and its lookup cost absorbed into the constants.
We study the algorithmic problem in the variable-dimension regime and show how to avoid searching an exponentially large precomputed net for each target unitary.
In particular, we introduce the notion of a good exponential basis and show that such a basis can replace the usual depth-zero net-search routine.
This yields a modification of the algorithm in which the use of an explicit net is fully moved to a preprocessing step.
For instruction sets that already contain, or allow the efficient construction of, a good exponential basis, the resulting online synthesis algorithm is polynomial in $d$ and polylogarithmic in $1/\varepsilon$.
For arbitrary universal instruction sets, the exponential dependence on $d^2$ is not removed, but is isolated into a one-time additive preprocessing cost.
Our technique uses differential-geometric methods to devise an integerized version of trotterization that replaces the depth-zero net query by a constructive local synthesis routine.
The same framework also suggests possible extensions based on other discretized numerical integration schemes.
\end{abstract}

\maketitle
\pagenumbering{arabic}
\section{Introduction}
\epigraph{\emph{The presence of the exponential $\exp(O(M^2))$ in the algorithm
complexity bound is rather disturbing [...]. As far as the asymptotic behavior at $\varepsilon\to 0$ is
concerned, it seems possible to make the computation polynomial in $M$,
that is, the exponential may become an additive term rather than a factor.
(To this end, one may try to use bases in the tangent space instead of nets
— the reader is welcome to explore this idea.) However, it is a challenge
to eliminate the exponential altogether. This may be only possible if one
changes the assumptions of the theorem, \eg, by saying that products of
$\poly(M)$ elements from $\mathcal{N}_{\varepsilon_0}$ constitute an $\varepsilon_0$-net (rather than $\mathcal{N}_{\varepsilon_0}$ being an $\varepsilon_0$-net itself). Such a $\mathcal{N}_{\varepsilon_0}$ can consist of only $\poly(M )$ elements, so it is
reasonable to ask whether there is an approximation algorithm with running
time $\poly(M \log(1/\varepsilon))$. This appears to be a difficult question in global unitary geometry.}}{Alexei Kitaev, Alexander Shen, and Michael Vyalyi, \emph{Classical and Quantum Computation}}

\emph{Quantum computing} was born from the desire to leverage the laws of quantum mechanics to perform tasks that are usually believed to lie outside the realm of efficient computation by digital machines.
These devices are modeled by sequences of unitary transformations, called \emph{gates}, acting on finite-dimensional complex Hilbert spaces $\C^d$.
The normalized vectors of these spaces---more precisely, the corresponding rays---describe pure states of physical quantum systems, called \emph{qubits} when $d=2$, or, more generally, \emph{qudits}.
Any realization of quantum computing comes with only a finite collection of implementable gates, $\{G_i\}_{i=1}^m$, called an \emph{instruction set}, which can be physically assembled into circuits and is \emph{universal}, meaning that these gates can be combined to approximate any unitary in $\SU(d)$ to arbitrary accuracy.
We note that universality is only an existence property, and we are naturally led to consider the corresponding compilation problem: given an input target unitary $U$, find a product of gates in $\{G_i\}_{i=1}^m$ that approximates $U$ within a prescribed error $\varepsilon$.
We call this the \emph{gate synthesis problem}.
Here, we are interested not only in the length of the resulting product, which we naturally desire to be small because it measures the resources required for the circuit implementation of the target unitary, but also in the classical time and space required to find it, as functions of both $d$ and $\varepsilon$.

It is not hard to imagine an algorithm that outputs circuits with $O(1/\varepsilon)$ gates and uses no ancillary qudits.
Nevertheless, even at the dawn of quantum computing, Deutsch, Barenco, and Ekert~\cite{deutsch1995universality} already conjectured that more efficient solutions to the gate synthesis problem, with gate count and total running time polylogarithmic in $1/\varepsilon$, could be possible.
We stress that the desire for this sort of asymptotic improvement can be justified by sensible requirements for \emph{quantum advantage}.
Dawson and Nielsen~\cite{dawson2005solovay} point out, for example, that the expected quadratic speedup of Grover's search algorithm~\cite{grover1996fast, grover1997quantum} is only meaningful if one assumes sublinear dependence of the gate count on the inverse of the computational accuracy.
Perhaps an even more daunting problem is that computations in the complexity class $\texttt{PostBQP}$ assume the ability to approximate circuit probabilities that decrease exponentially with the input size~\cite{aaronson2005quantum, aaronson2005complexity, alagic2017quantum, kuperberg2015hard}.
Consequently, the class can only be defined consistently and independently of the gate set if transformations between different gate sets incur at most a polylogarithmic overhead in the inverse of the accuracy.

In 1995, Solovay outlined, on an email list, an algorithm that approximates any unitary acting on qubits in classical $O(\log^k(1/\varepsilon))$ time using $O(\log^k(1/\varepsilon))$ gates from a universal instruction set, for some $k>1$.
This algorithm was later formalized and extended by Kitaev to qudits~\cite{kitaev1997quantum}.
The Solovay--Kitaev (SK) algorithm, as it is now known, uses the geometric structure of the Lie group of unitary gates, $\SU(d)$, to provide an ancilla-free solution to gate synthesis, where $d$ is the qudit dimension, that is, the dimension of the Hilbert space modeling its states.
Explicitly, it recursively improves a coarse initial approximation of the target consisting of $\ell_0$ gates and having error at most $\varepsilon_0$, producing circuits whose approximation errors decrease superlinearly with the recursion depth while their lengths grow geometrically.
Although this algorithm is almost as efficient as possible as a function of $\varepsilon$ alone---with modifications nearly reaching the information-theoretic bound of $O(\log(1/\varepsilon))$~\cite{kuperberg2023breaking}---it still requires a sufficiently good initial coarse approximation, within distance $\varepsilon_0$, for the recursion to kick in.

This coarse approximation is obtained from an $\varepsilon_0$-\emph{net}, $\mathcal{N}_{\varepsilon_0}$, consisting of circuits of depth at most $\ell_0$ such that every unitary in $\SU(d)$ lies within distance $\varepsilon_0$ of some net node~\cite{alon1987partitioning, mustafa2017epsilon}.
In the standard implementation, such a net is constructed by explicitly enumerating circuits up to depth $\ell_0$.
The net size required for SK to converge is independent of the final desired accuracy $\varepsilon$, but depends strongly on the qudit dimension $d$: for fixed $\varepsilon_0$, the covering number of $\SU(d)$ scales as $2^{\Omega(d^2)}$~\cite{dawson2005solovay}.
This exponential dependence of the net size on $d^2$ implies not only unreasonable memory requirements for high-dimensional qudits, but also a huge prefactor in the SK running time, since searches over the net are necessary at every depth-zero recursive call for each input target $U$.
Conversely, improving the dependence on the dimension is more than a theoretical curiosity: it is essential for the practical realization of several models of quantum computing, such as photonic systems~\cite{bente2025potential, wang2020qudits, wang2023photonic}, trapped ions~\cite{bruzewicz2019trapped, haffner2008quantum, hrmo2023native}, and topological quantum computing~\cite{das2006topological, freedman2003topological, stern2013topological, wang2010topological}.
Although metric-space similarity-search methods~\cite{zezula2006similarity}---\eg, geometric near-neighbor access trees~\cite{brin1995near}, kd-trees~\cite{bentley1975multidimensional, de2008computational}, ball trees~\cite{omohundro1989five}, and locality-sensitive hashing~\cite{indyk1998approximate}---can reduce typical query costs, they do not remove the exponential dependence on the size of the explicit net in the worst-case setting relevant here.
Ultimately, the use of nets in SK exposes the algorithm to the well-known \emph{curse of dimensionality}~\cite{bellman1966dynamic}.

This paper was born from the desire to investigate a version of the SK algorithm in which $\mathcal{N}_{\varepsilon_0}$ is \emph{only implicitly used}.
Thanks to the manifold structure of $\SU(d)$, such a net-free version of Solovay--Kitaev might be possible, but, as pointed out by Kitaev, Shen, and Vyalyi~\cite{kitaev2002classical}, an algorithmic construction seems far from trivial.
Still, these authors suggest that, by exploiting the geometric structure of $\SU(d)$, it might already be feasible to construct a version of the SK algorithm in which the original running time
\begin{equation*}
O\left((d^3+T_0)p\log^{k_t}(1/\varepsilon)\right)
\end{equation*}
is replaced by something of the form
\begin{equation}\label{eq: p and t0}
O\left(\poly(d)T_0+\poly(d)p\log^{k_t}(1/\varepsilon)\right),
\end{equation}
where $T_0$ denotes the cost of the depth-zero net-search routine and $p$ denotes the number of target unitaries that we wish to approximate within an error of $\varepsilon$ each.
The additive dependence between $p$ and $T_0$ in equation~\eqref{eq: p and t0} makes this alternative version algorithm better suited for \emph{online applications} where a stream of unitaries is given as input, which is precisely the setting most often encountered in practical quantum computing, where circuits must be transpiled from one gate instruction set into another.

To the best of our knowledge, this suggestion has not previously been developed into an explicit algorithm.
In this paper, we give such a construction.
Our method treats the dependence on the coarse net $\mathcal{N}_{\varepsilon_0}$ as a \emph{preprocessing} cost: after an initial phase in which the net is explicitly constructed and queried, the recursive synthesis phase proceeds without further net searches.
Informally, the usual multiplicative dependence on the depth-zero net-search cost $T_0$ in the SK recursion is replaced by an additive preprocessing term.
From this point on, subsequent targets in $\SU(d)$ can be synthesized in time polynomial in $d$ and polylogarithmic in $1/\varepsilon$, with no additional access to $\mathcal{N}_{\varepsilon_0}$.

In fact, for some suitable choices of instruction sets, the preprocessing step can be fully avoided.
Explicitly, given a target unitary $U$, our technique uses a bi-invariant Riemannian metric on $\SU(d)$ to compute a geodesic from the identity $I$ to $U$ and approximate it using the gates of the instruction set.
Our choice of a bi-invariant metric is motivated by the fact that its geodesics can be approximated to arbitrary accuracy using \emph{product formulas}.
These approximation techniques allow us to derive a version of the SK algorithm, Algorithm~\ref{alg: modified sk}, that is polynomial in $d$, provided that we use a special instruction set
$\{\calG_j\}_{j=1}^M$, which we call a \emph{good exponential basis}.
However, because SK is most useful when applied to an arbitrary instruction set, we proceed to describe an algorithm that transforms a general instruction set into a good exponential basis.
Constructing $\{\calG_j\}_{j=1}^M$ from the original gates $\{G_i\}_{i=1}^m$ is possible with the standard SK algorithm, and this is exactly the preprocessing step in which a net is still used.
The overall complexity is summarized in Corollary~\ref{col: sk}.
We do not know whether this was the construction that the authors of~\cite{kitaev2002classical} had in mind, but it has the advantage of modifying SK only in the subroutine in which $\mathcal{N}_{\varepsilon_0}$ is searched.
In particular, our framework is compatible with modern improvements to the SK algorithm, such as the inverse-free version of~\cite{bouland2021efficient} and the improved asymptotics of~\cite{kuperberg2023breaking}, as well as with more accurate numerical integration schemes for geodesics in $\SU(d)$.

This paper is divided as follows.
We start in Section~\ref{sec: geometry quantum} by describing a geometric-flavored formulation of quantum computing.
In this, no further technical knowledge of quantum mechanics or quantum computing beyond what was laid in this introduction will be required from the reader.
Most of the work in Section~\ref{sec: geometry quantum} will be in establishing the BCH formula, which will be a crucial ingredient to establish the SK Theorem in Section~\ref{sec: sk section big}.
In Section~\ref{sec: lifting}, our modification of the algorithm is presented, and its correctness and running complexity are demonstrated.
We finish with a discussion where our methods are compared with the usual SK.
\vspace{0.3 cm}

\paragraph{\textbf{Conventions:}}
As is customary in the literature on the Solovay--Kitaev algorithm, our complexity bounds are stated in the real-RAM model: arithmetic operations and comparisons on real numbers are assumed to take unit time.
We therefore do not track the bit precision required to implement scalar arithmetic, matrix logarithms, or spectral decompositions.
For a discussion of the role of precision in the use of SK in complexity theory, see~\cite{aaronson2014postbqp} and the comments therein.
We expect that the estimates in this paper can be made stable under finite precision, given explicit bounds on the relevant condition numbers, but we do not pursue such an analysis here.

We also use the standard straight-line-program representation for synthesized circuits.
Thus, the output is represented by a sequence of previously constructed gates and subroutines rather than by an explicitly expanded word over the original instruction set.
Our space bounds refer to the working memory required to construct this representation; see~\cite{dawson2005solovay} for a related discussion.

Finally, we assume that every instruction set is closed under inverses, that is, if $G_i$ is an implementable gate, then so is $G_i^\dagger$.
A version of the SK algorithm that does not assume direct access to inverses was given in~\cite{bouland2021efficient}, but it has higher running-time costs.

\vspace{0.3 cm}
\paragraph{\textbf{Acknowledgments:}}
This work has been partially supported by the ANR project ANR-20-CE48-0007 (AlgoKnot) and the project ANR-15-IDEX-0001 (UCA JEDI). It has also been supported by the French government, through the France 2030 investment plan managed by the Agence Nationale de la Recherche, as part of the ``UCA DS4H'' project, reference ANR-17-EURE-0004.
We are much in debt to Chih-Kang Huang for all the detailed answers to our many questions on quantum control theory; this paper would probably not exist if it were not for his immense help. 

\section{Geometry of quantum computing}\label{sec: geometry quantum}
In the usual model for pure state quantum mechanics assumed in this paper, the evolution of some normalized state $\ket{\psi}\in\C^d$ is described by a unitary operator $U$ in the \emph{special unitary group} $\SU(d)$, meaning that $U^\dagger U =I$, for $I$ the complex $d\times d$ identity matrix and with $\det U = +1$.
Unitarity of $U$ implies that its eigenvalues are of form $\mu_j = e^{i\theta_j}$ for some $\theta_j\in \R$, whereas the constraint on the determinant gives
\begin{equation}\label{eq: mod eigenvalues}
    \sum_j \theta_j =0 \; \mod 2\pi.
\end{equation}
As a subset of the $d\times d$ complex matrices, $\SU(d)$ naturally inherits the operator norm, which we denote by $\|\cdot\|$.
In particular, because special unitary matrices are normal, $\|U\|$ equals the largest absolute value of an eigenvalue of $U\in\SU(d)$, which is $1$.

The product and inversion rules of $\SU(d)$ are smooth when giving it a (real) manifold structure of dimension $d^2-1$. 
In particular, $\SU(d)$ is a 
\emph{compact Lie group}.
We will not review here the many important properties of compact Lie groups (the interested reader might refer to~\cite{brocker2003representations}), but we will often use their tangent space at the identity, $T_I\SU(d)$, which we denote by $\su(d)$.
Naturally, this forms a real vector of dimension $d^2-1$, and the constraints of $\SU(d)$ can be used to identify $\su(d)$ with the vector space of Hermitian ($ X^\dagger = X$) and traceless ($\tr X = 0$) matrices, which is closed under \emph{commutators} $[X,Y]=i(XY-YX)$, where multiplication is taken with respect to the usual matrix product.
When endowed with the commutator $[\cdot,\cdot]$, $\su(d)$ forms a \emph{Lie algebra}~\cite{erdmann2006introduction, hall2013lie} and we shall call it \emph{the} Lie algebra of $\SU(d)$.
For later reference, we will note that the induced Lie algebra structure implies that $[\cdot,\cdot]$ is bilinear (with respect to real linear combinations) and anti-commutative.
We can use matrix exponentiation to define a smooth map from the Lie algebra to the group. 
We follow the standard Taylor series to write
\begin{equation}\label{eq: matrix exponential}
    \exp(iX)=\sum_{n=0}^\infty \frac{(iX)^n}{n!}
\end{equation}
where the factor of $i$ is introduced to guarantee that the exponential converges to an element of $\SU(d)$.

We will find it useful to also define a map from the Lie group to the algebra as a right-inverse of the exponential function.
Its definition and existence are given through the lemma below.

\begin{lemma}\label{lm: new bound}
    Suppose $U\in\SU(d)$.
    Then there exists an $H\in\su(d)$ such that
    \begin{equation*}
        U = \exp(iH)\quad \text{and}\quad \|H\|\leq 2\pi.
    \end{equation*}
    In particular, we define the \emph{logarithmic} of $U$ as $\log U = H$.
\end{lemma}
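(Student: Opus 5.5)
We would prove this by diagonalizing $U$ and choosing the eigenphases carefully so that the resulting Hermitian matrix is traceless and has small norm. Since $U$ is unitary, hence normal, the spectral theorem gives a unitary $V$ with $U = V\,\diag(e^{i\theta_1},\dots,e^{i\theta_d})\,V^\dagger$. Each $\theta_j$ is determined only modulo $2\pi$, so we first pick representatives $\theta_j\in(-\pi,\pi]$. By~\eqref{eq: mod eigenvalues}, $S:=\sum_j\theta_j = 2\pi k$ for some integer $k$, and the bound $|\theta_j|\le\pi$ gives $|k|\le d/2$.

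The key step is to make the phases sum exactly to zero without letting any of them grow too large. If $k>0$, we subtract $2\pi$ from $k$ of the phases; if $k<0$, we add $2\pi$ to $|k|$ of them. To keep the new phases small, we modify the phases closest to the side we are moving away from. Concretely, when $k>0$ we replace the $k$ largest $\theta_j$ by $\theta_j-2\pi$. The new phases $\theta'_j$ then satisfy $\sum_j\theta'_j=0$, and each is still congruent to the original modulo $2\pi$, so $e^{i\theta'_j}=e^{i\theta_j}$. The old phases lie in $(-\pi,\pi]$, so the modified ones lie in $(-3\pi,-\pi]$, and in every case $|\theta'_j|\le 3\pi$.

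Getting down to $2\pi$ needs a slightly finer argument, and this is the main obstacle. For a lowered phase we want $\theta_j-2\pi\ge-2\pi$, which amounts to $\theta_j\ge 0$. Suppose instead that one of the $k$ largest phases were negative. Then at most $k-1$ phases are nonnegative, so $S\le (k-1)\pi$. But $S=2\pi k>(k-1)\pi$ when $k\ge1$, a contradiction. Hence all $k$ lowered phases satisfy $\theta_j\ge 0$, so $\theta'_j\in[-2\pi,-\pi]$. The unchanged phases stay in $(-\pi,\pi]$. The case $k<0$ is symmetric: we raise the $|k|$ smallest phases, which must all be nonpositive by the same counting argument. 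If any boundary convention is awkward, we take representatives in $[-\pi,\pi)$ instead for that case.

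Finally we set $H=V\,\diag(\theta'_1,\dots,\theta'_d)\,V^\dagger$. This matrix is Hermitian because $V$ is unitary and the $\theta'_j$ are real, and it is traceless because $\tr H=\sum_j\theta'_j=0$. Hence $H\in\su(d)$. The exponential series~\eqref{eq: matrix exponential} commutes with unitary conjugation, so $\exp(iH)=V\,\diag(e^{i\theta'_j})\,V^\dagger=U$. Since $H$ is normal, its operator norm equals its largest eigenvalue modulus, so $\|H\|=\max_j|\theta'_j|\le 2\pi$.
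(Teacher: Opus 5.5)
Your proposal is correct and follows essentially the same route as the paper: diagonalize, take principal eigenphases in $(-\pi,\pi]$ whose sum is $2\pi k$, and shift $|k|$ phases of the appropriate sign by $\mp 2\pi$ so the trace vanishes while every phase stays in $[-2\pi,2\pi]$. Your counting argument ($S\le (k-1)\pi<2\pi k$) supplies the justification for the paper's bare assertion that at least $k$ phases are positive. Choosing the $k$ largest phases rather than arbitrary positive ones, as the paper does, makes no difference to the argument.
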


\begin{proof}
    By the spectral theorem, there exists some matrix $V\in U(d)$ such that
    \begin{equation*}
        U = V \diag(e^{i\widetilde{\theta_1}},\dots,e^{i\widetilde{\theta_d}}) V^\dagger,
    \end{equation*}
    with $\widetilde{\theta_j}$ as in equation~\eqref{eq: mod eigenvalues}.
    In particular one may assume that $\widetilde{\theta_j}\in(-\pi,\pi]$.
    The matrix
    \begin{equation*}
        \widetilde{H} = V \diag(\widetilde{\theta_1},\dots,\widetilde{\theta_d})V^\dagger
    \end{equation*}
    is called the \emph{principal branch logarithm} of $U$.
    The spectral theorem implies that $\exp(i\widetilde{H})=U$, where $\exp$ is taken as in equation~\eqref{eq: matrix exponential}.
    We note that, while $\widetilde{H}$ is hermitian, the choice of $|\widetilde{\theta_j}|\leq \pi$ makes that it is not traceless, but only that
    \begin{equation*}
        \tr \widetilde{H} = \sum_j \widetilde{\theta_j} = 2k \pi,
    \end{equation*}
    for some $k\in\Z$. 

    We will choose some $\theta_j$ such that
    \begin{equation*}
         {H} = V \diag({\theta_1},\dots,{\theta_d})V^\dagger
    \end{equation*}
    is traceless, but $\exp(iH)=U$.
    If $k=0$, let $\theta_j=\widetilde{\theta_j}$ and we are done.
    If $|k|>0$, there exists at least $k$ values of $\widetilde{\theta_j}$ that are positive.
    Randomly choosing $k$ of them to be such that
    \begin{equation*}
        \theta_j = \widetilde{\theta_j} - 2\pi,
    \end{equation*}
    whereas, for the rest, we simply let $\theta_j=\widetilde{\theta_j}$,
    we note that
    \begin{equation*}
        \sum_j \theta_j = 0\quad\text{and}\quad|\theta_j|\leq 2\pi.
    \end{equation*}
    But as $\diag(e^{i\widetilde{\theta_1}},\dots,e^{i\widetilde{\theta_d}})=\diag(e^{i{\theta_1}},\dots,e^{i{\theta_d}})$, the result follows.
    Similarly, if $k<0$, we add $2\pi$ to $|k|$ of the negative $\widetilde{\theta_d}$.
\end{proof}
\begin{remark}
    It should be noted that, provided $\|H\|\leq \pi$, our definition of logarithmic agrees with the principle branch, refer to item~\ref{it: norm on log} of Lemma~\ref{lm: properties operator norm}.
\end{remark}

Many of the usual properties of the exponential function from scalar calculus are translated to equation~\eqref{eq: matrix exponential}. For example, $\exp(iH)^n = \exp(inH)$ for any $n\in \mathbb{Z}$, and $\frac{d}{dt}\exp(itH)=iH\exp(itH)$.
Nevertheless, in general, because matrix multiplication is non-commutative, $\exp(X)\exp(Y)$ does not equal $\exp(X+Y)$.
These quantities can be connected, on the other hand, by the Baker-Campbell-Hausdorff (BCH) formula~\cite{hall2013lie}, which we state in the lemma below.

Many of the usual properties of the exponential function from scalar calculus are translated to equation~\eqref{eq: matrix exponential}. For example, $\exp(iH)^n = \exp(inH)$ for any $n\in \mathbb{Z}$, and $\frac{d}{dt}\exp(itH)=iH\exp(itH)$.
Nevertheless, in general, because matrix multiplication is non-commutative, $\exp(X)\exp(Y)$ does not equal $\exp(X+Y)$.
These quantities can be connected, on the other hand, by the Baker-Campbell-Hausdorff (BCH) formula~\cite{hall2013lie}, which we state in the lemma below.

\begin{lemma}[BCH formula]\label{lm: bch}
Let $X_1,\dots,X_M \in \mathfrak{su}(d)$ be Hermitian operators satisfying
$$\sum_{j=1}^M\|X_j\| \le \delta,$$
for some $\delta \le \delta_0$, where $\pi > \delta_0 > 0$ is a sufficiently small universal constant.
Then the product admits the factorized expansion
\begin{equation}\label{eq: bch 1}
    \log\left[\prod_{j=1}^M \exp(i X_j)\right]
=
\sum_{j=1}^M X_j + \frac{1}{2} \sum_{i<j} [X_i, X_j] + R ,
\end{equation}\label{eq: bch 2}
where the remainder $R \in \mathfrak{su}(d)$ satisfies the norm bound
$$
\|R\| \leq C \delta^3,
$$
for some universal constant $C \geq 1$ independent of $d$ and $M$.
\end{lemma}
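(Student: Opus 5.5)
Note the paper's convention $[X,Y]=i(XY-YX)$. With it, the second-order BCH term for $\exp(iX)\exp(iY)$ is $\tfrac{i}{2}(XY-YX)=\tfrac12[X,Y]$, which is Hermitian and traceless. So formula~\eqref{eq: bch 1} is the usual BCH expansion rewritten in this convention. The plan is to prove it by induction on the number of factors, but with the error bound tracked in terms of $\delta$ rather than $M$, so that $C$ does not depend on $M$.

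\textbf{Step 1: the matrix series directly, avoiding the two-factor recursion.} I will not iterate the two-factor formula, since that would accumulate errors in $M$. Instead I expand
\[
P=\prod_{j}\exp(iX_j)=\sum_{n_1,\dots,n_M\ge0}\frac{(iX_1)^{n_1}\cdots(iX_M)^{n_M}}{n_1!\cdots n_M!}.
\]
Grouping terms by total degree gives $P=I+A_1+A_2+E_3$ with
\[
A_1=i\sum_j X_j,\qquad A_2=-\sum_{i<j}X_iX_j-\tfrac12\sum_j X_j^2,
\]
and the tail satisfies $\|E_3\|\le\sum_{n\ge3}\delta^n/n!\le e^{\delta}\delta^3/6$. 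The multinomial sum over the tail is dominated by the expansion of $\prod_j e^{\|X_j\|}=e^{\sum_j\|X_j\|}$, so no $M$ dependence appears. The same reasoning gives $\|P-I\|\le e^{\delta}-1<1$ once $\delta_0$ is small.

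\textbf{Step 2: take the logarithm.} Set $Z=P-I$, so $\|Z\|\le e^\delta-1$. I use the convergent series $\log(I+Z)=Z-Z^2/2+\dots$, whose tail beyond second order is bounded by $\|Z\|^3/(1-\|Z\|)=O(\delta^3)$. Computing $Z-Z^2/2$ up to degree two, and noting $Z^2=A_1^2+O(\delta^3)$, gives
\[
A_1+A_2-\tfrac12A_1^2=i\sum_jX_j-\sum_{i<j}X_iX_j+\tfrac12\sum_{i<j}(X_iX_j+X_jX_i)+O(\delta^3),
\]
where the $\sum_j X_j^2$ terms cancel against the diagonal part of $A_1^2$. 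The right-hand side equals
\[
i\Big(\sum_jX_j+\tfrac12\sum_{i<j}[X_i,X_j]\Big)+O(\delta^3).
\]
Every constant involved is a universal function of $\delta_0$.

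\textbf{Step 3 (main obstacle): matching with the paper's $\log$.} The series logarithm $L$ has $\|L\|\le-\log(1-\|Z\|)$, which is small. Because $P$ is unitary, $L$ is skew-Hermitian, and $L=iH$ with $H$ Hermitian. Moreover $\det P=1$ gives $\tr H\in 2\pi\mathbb Z$, and smallness then forces $\tr H=0$. Also $\|H\|<\pi$, so by the remark after Lemma~\ref{lm: new bound} (via item~\ref{it: norm on log} of Lemma~\ref{lm: properties operator norm}), $H$ coincides with the logarithm defined there. Then $R:=H-\sum_jX_j-\tfrac12\sum_{i<j}[X_i,X_j]$ is Hermitian and traceless, since commutators are traceless, so $R\in\su(d)$. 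Its norm is the $O(\delta^3)$ remainder from Step 2, with $C$ determined by $\delta_0$ alone and enlarged to $C\ge1$ if necessary.

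The delicate point is keeping the tail estimates free of $M$. The bound $\prod_j e^{\|X_j\|}\le e^{\delta}$ is what handles this.
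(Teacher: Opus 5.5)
Your Steps 1 and 2 are sound, and they form a more self-contained route than the paper's. The paper gives no argument for this lemma and simply invokes the standard BCH expansion, with $\delta_0\le\log 2$ for convergence. In your version, the multinomial identity $\sum_{n_1+\dots+n_M=N}\prod_j\|X_j\|^{n_j}/n_j!=(\sum_j\|X_j\|)^N/N!$ makes the $M$-independence transparent, and the degree-two bookkeeping with $[X,Y]=i(XY-YX)$ is correct.

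The gap is in Step 3, at ``$\det P=1$ gives $\tr H\in 2\pi\Z$, and smallness then forces $\tr H=0$.'' The smallness you have is $\|H\|\le-\log(1-\|Z\|)=O(\delta_0)$, which does not depend on $d$. The only way you turn it into a trace bound is $|\tr H|\le d\|H\|$, and that is below $2\pi$ only when $d\lesssim 2\pi/\delta_0$.

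This is not merely a weak estimate. Take $P=e^{2\pi i/d}I$. It lies in $\SU(d)$ with $\|P-I\|=2\sin(\pi/d)$, as small as you like. Its series logarithm is $\tfrac{2\pi i}{d}I$, so $H=\tfrac{2\pi}{d}I$ is tiny, yet $\tr H=2\pi$, and the paper's traceless $\log P$ has norm close to $2\pi$. So being unitary, of determinant one and close to $I$ does not by itself give a small traceless logarithm uniformly in $d$. Since $\delta_0$ and $C$ must be independent of $d$, and since both the identification of your $H$ with the paper's $\log$ and the claim $R\in\su(d)$ rest on $\tr H=0$, this step needs a different argument.

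The missing ingredient is that your particular $P$ is joined to $I$ by a path that stays where the series logarithm is defined. Set $P_s=\prod_j\exp(isX_j)$ for $s\in[0,1]$.
\begin{itemize}
\item Your Step 1 bound gives $\|P_s-I\|\le e^{s\delta}-1\le e^{\delta}-1<1$, so $s\mapsto\tr\log P_s$ is continuous.
\item $\det P_s=1$ puts $\tr\log P_s$ in $2\pi i\Z$.
\item It equals $0$ at $s=0$, hence also at $s=1$.
\end{itemize}
Equivalently, $\frac{d}{ds}\tr\log P_s=\tr(P_s^{-1}\dot P_s)=i\sum_j\tr X_j=0$ by cyclicity of the trace.

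With this inserted, the rest of Step 3 is fine. The eigenvalues $e^{i\theta}$ of $P$ satisfy $|e^{i\theta}-1|<1$, so $|\theta|<\pi/3$ and $H$ is the principal branch. Being traceless, it coincides with the paper's $\log$. Then $R$ is Hermitian and traceless, with the $O(\delta^3)$ bound coming from Step 2.
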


\begin{remark}
Letting $\delta_0 \leq \log 2$ is enough to guarantee convergence~\cite[Proposition 2.2]{biagi2020baker}.
We could also actually estimate this constant if we used the full BCH expansion, but we will not attempt to do so here. We note that we only assume $C\geq 1$ for later convenience.
\end{remark}

\begin{remark}\label{rm: bch}
     After increasing $C$ if necessary, in the special case $ X_1=H+E$ and $X_2= H$, the remainder also satisfies the refined estimate
    $$
    \|R\|
    \leq C\left(\|H\|+\|E\|\right)^2\|E\|.
    $$
    We will need this version of BCH in the proof of Theorem \ref{th: GM}.
\end{remark}

Besides BCH, we will also amply use the following bounds on the operator norm.

\begin{lemma}\label{lm: properties operator norm}
Suppose that $U,V\in \SU(d)$ are unitary, $A,B\in\su(d)$ Hermitian. Then
\begin{enumerate}[label=(\alph*)]
    \item\label{it: unitary invariance}
    $\|UAV\|=\|A\|;$
    \item\label{it: exponent approximation}
    $\|e^{iA}-e^{iB}\|\leq \|A-B\|;$
    \item\label{it: norm on log}
      Suppose $\|A\|\leq \pi$, then 
    $$\|A\|\leq \frac{\pi}{2}\|e^{iA}-I\|.$$
\end{enumerate}

\end{lemma}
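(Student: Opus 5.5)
Each item reduces to an eigenvalue or Duhamel-type argument, so I would treat the three parts separately.

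For item~\ref{it: unitary invariance}, I would use the definition of the operator norm as $\sup_{\|v\|=1}\|Av\|$ together with the fact that unitaries preserve the Euclidean norm on $\C^d$. The substitution $w=Vv$ is a bijection of the unit sphere, so $\sup_{\|v\|=1}\|UAVv\| = \sup_{\|w\|=1}\|UAw\| = \sup_{\|w\|=1}\|Aw\|$. The same argument works with $U,V\in U(d)$; only unitarity is used, not the determinant condition.

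For item~\ref{it: exponent approximation}, I would use a Duhamel/telescoping identity. Set $F(t)=e^{itA}e^{i(1-t)B}$, so $F(0)=e^{iB}$ and $F(1)=e^{iA}$. Differentiating gives
\begin{equation*}
F'(t)=e^{itA}\,i(A-B)\,e^{i(1-t)B},
\end{equation*}
since $e^{itA}$ commutes with $A$ and $e^{i(1-t)B}$ commutes with $B$. Integrating over $[0,1]$ and applying item~\ref{it: unitary invariance} to the integrand (both exponentials are unitary because $A,B$ are Hermitian) gives $\|e^{iA}-e^{iB}\|\le\int_0^1\|A-B\|\,dt=\|A-B\|$. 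The one detail to check is the product-rule derivative, which follows from $\frac{d}{dt}\exp(itH)=iH\exp(itH)$ stated earlier.

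For item~\ref{it: norm on log}, I would diagonalize $A=V\diag(a_1,\dots,a_d)V^\dagger$ with real $a_j$ and $|a_j|\le\pi$. Then $e^{iA}-I=V\diag(e^{ia_j}-1)V^\dagger$ is normal, so by item~\ref{it: unitary invariance} its norm is $\max_j|e^{ia_j}-1|=\max_j 2|\sin(a_j/2)|$. Likewise $\|A\|=\max_j|a_j|$. It therefore suffices to prove the scalar inequality $|a|\le \pi|\sin(a/2)|$ for $|a|\le\pi$. This is Jordan's inequality $\sin x\ge \tfrac{2}{\pi}x$ on $[0,\pi/2]$, applied at $x=|a|/2$, and it follows from the concavity of $\sin$ on that interval. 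Taking the index $j$ that achieves $\|A\|$ gives
\begin{equation*}
\|A\|=|a_j|\le \pi|\sin(a_j/2)|=\tfrac{\pi}{2}|e^{ia_j}-1|\le\tfrac{\pi}{2}\|e^{iA}-I\|.
\end{equation*}
The main point needing care is this last step: the hypothesis $|a_j|\le\pi$ is exactly what keeps $a_j/2$ in the range where Jordan's inequality holds. This also shows that the logarithm of Lemma~\ref{lm: new bound} coincides with the principal branch whenever its norm is at most $\pi$.
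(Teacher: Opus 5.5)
Your proof is correct. Items (a) and (c) follow the paper essentially verbatim. For (a) you use unitary invariance of the Euclidean norm and the fact that $V$ permutes the unit sphere. For (c) you diagonalize, write $\|e^{iA}-I\|=\max_j 2|\sin(a_j/2)|$, and apply $\sin x\ge 2x/\pi$ on $[0,\pi/2]$ via concavity.

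The genuine difference is in (b). The paper interpolates linearly in the Lie algebra, $f(t)=\exp[i(B+t(A-B))]$. It then needs the Duhamel formula for the differential of the exponential map, $f'(t)=i\int_0^1 e^{i(1-s)X(t)}(A-B)e^{isX(t)}\,ds$, which leads to a double integral bounded by unitary invariance. You instead interpolate in the group, $F(t)=e^{itA}e^{i(1-t)B}$. Its derivative $e^{itA}\,i(A-B)\,e^{i(1-t)B}$ follows from the product rule and $\frac{d}{dt}e^{itH}=iHe^{itH}$ alone, since each factor commutes with its own generator. This is shorter and uses only facts the paper has already stated. What the paper's route buys in exchange is a statement about the exponential map itself: the differential of $X\mapsto e^{iX}$ has operator norm at most $1$ at every Hermitian $X$. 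Hence $t\mapsto e^{iX(t)}$ is $1$-Lipschitz along any path of Hermitian matrices, not just between two endpoints, though only the endpoint bound is used later in the paper.

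One small caveat concerns your closing aside. Item (c) does not show that the logarithm of Lemma~\ref{lm: new bound} agrees with the principal branch when its norm is at most $\pi$, and that claim actually fails on the boundary. For example, $H=\diag(\pi,-\pi)$ has norm $\pi$ and gives $e^{iH}=-I$, whose principal-branch logarithm is $\pi I\notin\su(2)$. The claim holds under $\|H\|<\pi$ by a direct eigenvalue argument. Either way, it is not part of this lemma and does not affect your proof.
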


\begin{proof}
   Item~\ref{it: unitary invariance} holds for any $d\times d$ matrix $A$, but we will only need the Hermitian case.
    The result follows by the invariance of the inner-product under the actions of the unitary group.
    Explicitly,
    \begin{equation*}
        \|UA\|^2=\sup_{\ket{\psi}}\bra{\psi}A^\dagger U^\dagger U A\ket{\psi} = \sup_{\ket{\psi}}\bra{\psi}A^\dagger A\ket{\psi}=\|A\|^2.
    \end{equation*}
    Similarly, we note that multiplication on the left by any unitary matrix is transitive on the unit sphere of $\C^d$, implying that
    \begin{equation*}
        \|AV\|^2=\sup_{\ket{\psi}} \bra{\psi}V^\dagger A^\dagger A V\ket{\psi}=\sup_{\ket{\psi'}}\bra{\psi'}A^\dagger A \ket{\psi'}=\|A\|^2.
    \end{equation*}
    
    For item~\ref{it: exponent approximation} we define
    \begin{equation*}
        f(t)=\exp[i (B+ t(A-B))]=\exp[iX(t)]. 
    \end{equation*}
    Duhamel's formula~\cite{higham2008functions} gives that
    \begin{equation*}
    \begin{split}
         \frac{d}{dt} f(t)=i \int_{0}^{1}
        e^{i(1-s)X(t)} \, (A-B) \, e^{isX(t)} \, ds 
    \end{split}
    \end{equation*}
    so
      \begin{equation*}
    \begin{split}
        \|e^{iA}-e^{iB} \|&= \left\|\int_0^1\frac{d}{dt} f(t)dt\right\|\\
        &\leq \int_{0}^{1}\int_{0}^{1}
        \|e^{i(1-s)X(t)}\, (A-B) \, e^{isX(t)}\|\, ds\: dt\\
        &\leq \int_{0}^{1}\|A-B\|dt\\
        &
        \leq \|A-B\|
    \end{split}
    \end{equation*}
    by~\ref{it: unitary invariance} and the Fundamental Theorem of Calculus.

    Item~\ref{it: norm on log} follows by noting that
    \begin{equation*}
    \begin{split}
    \|e^{iA}-I\|
    = \max_j |1-\exp(i\lambda_j)\|
    &= 2\max_j|\sin(\lambda_j/2)|,
    \end{split}
    \end{equation*}
    where $\lambda_j$ are the eingvalues of $A$. 
    Since $\max_j|\lambda_j|\leq \pi$ and by concavity of the sine function in the first quadrant, $\left|\sin x\right|\geq \frac{2|x|}{\pi}$ for all $x\in\left[-\frac{\pi}{2},\frac{\pi}{2}\right]$. Consequently,
    \begin{equation*}
    2 \frac{\|A\|}{\pi}\leq \|e^{iA}-I\|.
    \end{equation*}
\end{proof}

For later reference, we will find it useful to combine Lemma~\ref{lm: bch} with item~\ref{it: exponent approximation} as in
\begin{equation}\label{eq: bch}
    \left\|\prod_{j=1}^M \exp(i X_j)- \exp\left(\sum_{j=1}^MiX_j\right)\right\|\leq\frac{1}{2} \sum_{i<j} \|[X_i, X_j]\| + \|R\|. 
\end{equation}

\section{The Solovay-Kitaev Algorithm}\label{sec: sk section big}
In a nutshell, Solovay-Kitaev explores the local geometry of $\SU(d)$ to solve the gate synthesis problem and we will describe in this section.
We will be mostly following Dawson and Nielsen~\cite{dawson2005solovay}, with some extra care in getting exact bounds on the constants.
In fact, we conjecture that we are being unnecessarily conservative in our estimations of these prefactors and that the values reported by those authors are not only enough from a practical side, but theoretically sufficient as well.

\begin{algorithm}[H]
\caption{Solovay-Kitaev}
\label{alg: sk}
\KwIn{A target matrix $U\in \SU(d)$ and a depth parameter $t$}
\KwRequire{A universal instruction set $\{G_i\}$ and an $\varepsilon_0$-net, $\mathcal{N}_{\varepsilon_0}$}
\KwOut{A circuit approximation of $U$}
\If{$t==0$}{\Return \texttt{NET SEARCH}($U, \mathcal{N}_{\varepsilon_0}$)}
\Else{
$U_{t-1}\gets \texttt{SOLOVAY KITAEV}(U, t-1)$;

$V, W\gets \texttt{COMMUTATOR DECOMPOSITION}(\log(U^\dagger U_{n-1}))$

$V_{t-1}\gets \texttt{SOLOVAY KITAEV}(V, t-1)$;

$W_{t-1}\gets \texttt{SOLOVAY KITAEV}(W, t-1)$;

\Return $U_{t-1}W^\dagger_{t-1} V^\dagger_{t-1}W_{t-1} V_{t-1}$
}
\end{algorithm}

\begin{figure}
    \centering
    \includegraphics[width=0.8\textwidth]{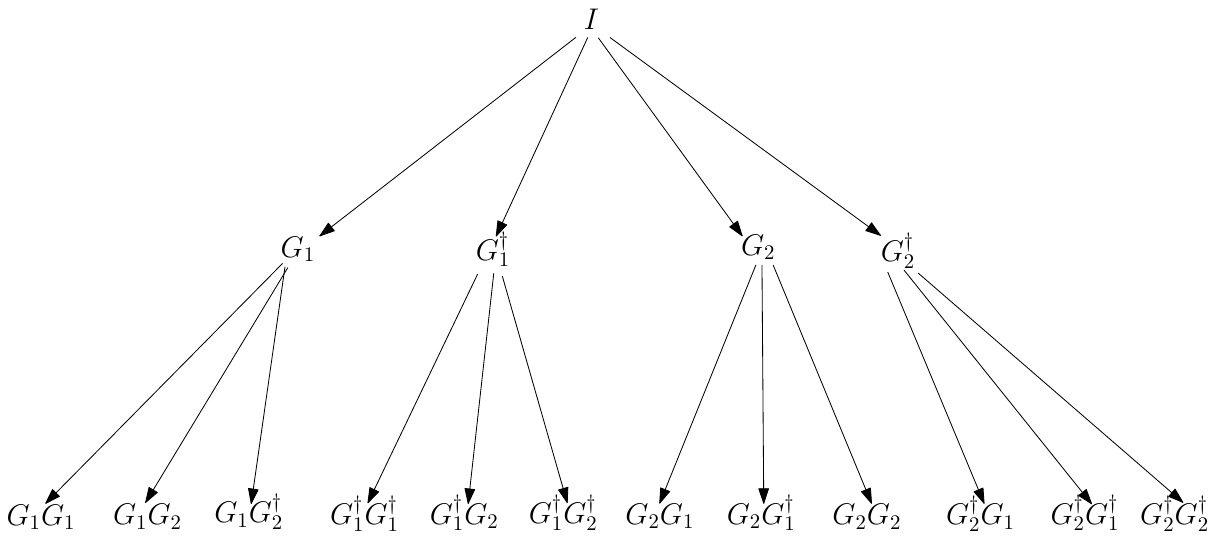}
     \caption{The first three layers of a tree structure used to build a net out of four instruction gates, $G_1, G_2, G_1^\dagger$, and $G_2^\dagger$. Trivial relations such as $G_1G_1^\dagger$, $G_2^\dagger G_2$, which equal the identity element $I$ (already in the uppermost layer of the tree) have been omitted.}
  \label{fig: tree}
\end{figure}

The algorithm is clearly recursive and, at depth zero, it executes a search over the $\varepsilon_0$-net, $\mathcal{N}_{\varepsilon_0}$.
In most of the descriptions, the net is assumed to be constructed through a brute force tree-like structure, where each layer is defined by appending a different instruction gate to the nodes of the previous one, see Figure~\ref{fig: tree}.
Some easy-to-implement operations help in decreasing the total number of nodes in this tree: for example, we avoid adding inverses that would lead to trivial cancellations, and, at each layer, we can apply some pruning operations to delete nodes that are too close to some other previous node of the net.
Of course, other relations could be taken into account to prune $\mathcal{N}_{\varepsilon_0}$ even further, but because the volume of a ball of radius $\varepsilon_0$ increases as $$\Theta(\varepsilon_0^{M})=\Theta(\varepsilon_0^{d^2-1})$$
the number of nodes in the net, $N_{\varepsilon_0}$, which, in the best case scenario, is {proportional to the manifold's volume divided by the balls' volume}, needs to grow \emph{at least} as $\Omega(\varepsilon_0^{-d^2+1})$.
Moreover, in total, the depth-zero case, that is, a search on the net, is called $3^t$ times during execution.
In practice, the number of calls is not as bad as it may look: assuming $0<\varepsilon<\varepsilon_0$ and $c^2\varepsilon_0<1$ where $c$ is a universal constant, the total depth of
\begin{equation}\label{eq: t}
    t = \left\lceil \frac{\log\left[\frac{\log(\varepsilon c^2)}{\log(\varepsilon_0 c^2)}\right]}{\log(3/2)}\right\rceil
\end{equation}
is enough to return a circuit approximation of error at most $\varepsilon$ from the target.
Still, because each net search may, in the worst case, inspect all $N_{\varepsilon_0}$ nodes, the whole algorithm is inefficient as function of the dimension.

When $t>0$, we proceed, on line 4, by recursively requesting an approximation of the target $U$ at depth $t-1$.
We compute the \emph{residual} $\Delta$ of this lower level approximation as the quantity
$\Delta_{t-1} =  U^\dagger\cdot U_{t-1}$,
where the product is taken through ordinary matrix multiplication.
By recursion, we may assume that $\|\Delta_{t-1}-I\|<\varepsilon_0$ and we will show that the algorithm converges exactly because, for all $t$, $$ \|\Delta_{t}-I\|<\|\Delta_{t-1}-I\|.$$
In other words, the \emph{residuals contract}.
We can then describe each step as finding new matrices, $V$ and $W$, whose group commutation is used to cancel out the residual of the previous step as much as possible.
That is, we want some $V$ and $W$ such that
\begin{equation}\label{eq: gc approximation}
    \|W^\dagger V^\dagger WV- \Delta_{t-1}^\dagger\|\leq \varepsilon,
\end{equation}
and
\begin{equation}\label{eq: pre triangle inequality}
    \|U_{t-1}\cdot W^\dagger V^\dagger WV-U\|= \|U(\Delta_{t-1}W^\dagger V^\dagger WV-I)\|=\|W^\dagger V^\dagger WV- \Delta^\dagger_{t-1}\|\leq \varepsilon,
\end{equation}
where we repeatedly used item~\ref{it: unitary invariance} of Lemma~\ref{lm: properties operator norm}.

\begin{algorithm}
\setcounter{AlgoLine}{0}
\caption{Commutator decomposition}
\label{alg: cd}
\KwIn{An input matrix $\Delta$.}
\KwOut{Two matrices $V$ and $W$ such that $\exp(i [\log V, \log W]) = \Delta$ and $\|\log V\|, \|\log W\|< 2\sqrt{\|\log \Delta\|}$.}

\textbf{function} $\texttt{COMMUTATOR DECOMPOSITION}(\Delta):$

$Z\gets \log \Delta$;

$Z, Q \gets \texttt{DIAGONALIZE}(Z)$ \tcp*{$Z$ is now diagonal and $P$ is the change of basis} 

$Z, Q \gets \texttt{BALANCE EIGENVALUES}(Z, Q)$ \tcp*{See Algorithm~\ref{alg: balance eigenvalue}}

\For{$k\gets 1$ \KwTo $d$}{$v_k\gets \sqrt{\frac{1}{2}\sum_{j=1}^kZ[j,j]}$;}

$X\gets \sum_{k=0}^{d-1} v_k \left( \ket{k}\bra{k+1} + \ket{k+1}\bra{k} \right)$;

$Y \gets i\sum_{k=0}^{d-1} v_k \left( \ket{k}\bra{k+1} - \ket{k+1}\bra{k} \right)$;

$X, Y \gets Q X Q^\dagger, Q Y Q^\dagger $ \tcp*{Convert back to the original basis}

\Return $\exp(i X), \exp(iY)$

\end{algorithm}

\begin{algorithm}
\setcounter{AlgoLine}{0}
\caption{Balance eigenvalues}
\label{alg: balance eigenvalue}
\KwIn{A diagonal matrix $Z=\diag(\mu_1,\dots,\mu_d)$ and a change of basis $Q$.}
\KwOut{A permutation $\tau$ of the diagonal entries of $Z$ with $0\leq \sum_{j=1}^kZ[j,j]\leq 2\|Z\|$ and $Q$ in this new basis.}

\textbf{function} $\texttt{BALANCE EIGENVALUES}(Z, Q):$

$P \gets [j : \mu_j \ge 0]$ \tcp*{Stack of indices of non-negative eigenavalues}
$N \gets [j : \mu_j < 0]$\tcp*{Stack of indices of negative eigenavalues}

$w \gets 0$\;
 $\tau \gets []$\;

\For{$k = 1$ to $d$}{

    \eIf{$w < \|Z\|$}{
        \eIf{$P \neq \emptyset$}{
            $j \gets \text{pop}(P)$\;
        }{
            $j \gets \text{pop}(N)$\;
        }
    }{
        \eIf{$N \neq \emptyset$}{
            $j \gets \text{pop}(N)$\;
        }{
            $j \gets \text{pop}(P)$\;
        }
    }

    append $j$ to $\tau$\;
    $w \gets w + \mu_j$\;
}

$Q\gets\texttt{PermutationMatrix}(Q, \tau)$ \tcp*{Transform $Q$ to the $\tau$-permuted basis}

\Return{$\diag(\mu_{\tau[1]},\dots, \mu_{\tau[d]})$, $Q$}

\end{algorithm}

For this, we linearize the problem to the Lie algebra, so we can use the \texttt{COMMUTATOR DECOMPOSITION} subroutine.
In particular, we want $V$ and $W$ to be such that
\begin{equation}\label{eq: commutator}
    [\log V, \log W] = \log (\Delta_{t-1}) \text{ and }\|\log V\|, \|\log W\|\leq 2 \sqrt{\|\log(\Delta_{t-1})\|}.
\end{equation}
By BCH, the first constraint on equation~\eqref{eq: commutator} implies that
\begin{equation}\label{eq: theoretical contraction}
    W^\dagger V^\dagger WV\approx \Delta_{t-1}^\dagger,
\end{equation}
whereas the second implies that the error of this approximation is small, namely $O(\|\log(\Delta_{t-1})\|^{3/2})$.
The function \texttt{COMMUTATOR DECOMPOSITION} and its \texttt{BALANCE EIGENVALUES} subroutine are described in Algorithms \ref{alg: cd} and \ref{alg: balance eigenvalue}, but basically they use simple linear algebra operations to find these two matrices $V$ and $W$.
In general, these steps are of time complexity $O(d^3)$ due to the diagonalization of the residual.

\begin{lemma}\label{lm: cd}  
Algorithm~\ref{alg: cd} converges in time $O(d^3)$, where the input $U$ is assumed with size $d\times d$.
\end{lemma}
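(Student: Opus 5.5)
The plan is to bound the cost of each line of Algorithm~\ref{alg: cd}, and of its subroutine Algorithm~\ref{alg: balance eigenvalue}, in the real-RAM model fixed in the Conventions, and then add the bounds. The claim is only about running time, so correctness of the output (that $[\log V,\log W]=\log\Delta$) is not needed here. I will only check that each step is well defined and terminates.

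\textbf{Logarithm and diagonalization.} Computing $Z=\log\Delta$ follows the constructive proof of Lemma~\ref{lm: new bound}. One spectral decomposition of the $d\times d$ unitary $\Delta$ costs $O(d^3)$. Taking principal arguments of the $d$ eigenvalues costs $O(d)$. Shifting $|k|\le d$ of them by $2\pi$ costs $O(d)$. Reassembling $V\diag(\theta_j)V^\dagger$ takes two matrix products, $O(d^3)$. The \texttt{DIAGONALIZE} step is then free, since it can reuse the eigenbasis $Q=V$ and the diagonal already obtained; even recomputing it would cost only $O(d^3)$.

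The step I expect to be the main obstacle is that ``spectral decomposition in $O(d^3)$'' is exact only in idealized arithmetic; in practice it is achieved iteratively to a given precision. I would resolve this by appealing to the real-RAM convention stated in the paper, under which such decompositions are counted as $O(d^3)$ operations, as the authors stipulate.

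\textbf{Balancing.} In Algorithm~\ref{alg: balance eigenvalue}:
\begin{itemize}
\item $\|Z\|=\max_j|\mu_j|$ is read off the diagonal in $O(d)$.
\item Building the stacks $P$ and $N$ costs $O(d)$.
\item The loop runs exactly $d$ times. Each iteration does $O(1)$ comparisons, one pop and one addition. A pop is always possible because $|P|+|N|=d-k+1\ge 1$ at iteration $k$. So the loop costs $O(d)$.
\item Permuting the columns of $Q$ according to $\tau$ costs $O(d^2)$.
\end{itemize}

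\textbf{Remaining lines of Algorithm~\ref{alg: cd}.}
\begin{itemize}
\item The partial sums $\sum_{j\le k}Z[j,j]$ are computed cumulatively, so all $v_k$ take $O(d)$ in total, one square root each.
\item The tridiagonal matrices $X$ and $Y$ have $O(d)$ nonzero entries, and filling them as dense arrays costs $O(d^2)$.
\item The conjugations $QXQ^\dagger$ and $QYQ^\dagger$ are four matrix products, $O(d^3)$.
\item The final exponentials $\exp(iX)$ and $\exp(iY)$ of Hermitian matrices are computed by one more spectral decomposition each, $X=S\Lambda S^\dagger \Rightarrow \exp(iX)=Se^{i\Lambda}S^\dagger$, costing $O(d^3)$ each.
\end{itemize}
There is a cheaper option for these exponentials: diagonalize the tridiagonal matrix before conjugating, then multiply by $Q$. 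This is still $O(d^3)$.

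Every step terminates after finitely many operations, so summing the costs gives a total of $O(d^3)+O(d^2)+O(d)=O(d^3)$, which proves the claim.
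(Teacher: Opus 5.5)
Your operation count is correct, and it is more explicit than the paper's proof, which only notes that diagonalization dominates at $O(d^3)$. The gap is that you explicitly set aside correctness. In this paper, ``converges'' means terminating with the output specified in Algorithm~\ref{alg: cd}: $\exp(i[\log V,\log W])=\Delta$ with $\|\log V\|,\|\log W\|<2\sqrt{\|\log\Delta\|}$.

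The paper's proof reads the lemma this way. It opens by redoing the Kitaev--Shen--Vyalyi correctness argument, and the identical phrasing in Lemma~\ref{lm: balance eigenvalues} is proved through its partial-sum invariant. More importantly, the proof of Theorem~\ref{th: sk full} relies on exactly these output bounds when it controls $\|\log V\|$ ``because $V$ and $W$ are outputs of Algorithm~\ref{alg: cd}''. Even on your own terms, the step computing the $v_k$ is well defined only if every partial sum $\sum_{j\le k}\widetilde Z_{jj}$ is nonnegative. Only then are the square roots real, $X,Y$ Hermitian, and the outputs unitary. Checking that a pop is always available does not give this.

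What you need to add is the substance of the paper's proof:
\begin{enumerate}
\item[(i)] With $[A,B]=i(AB-BA)$, compute $[X,Y]$ directly. The off-diagonal terms cancel, and the $j$-th diagonal entry is $2(v_j^2-v_{j-1}^2)$. So the choice $v_k^2=\tfrac12\sum_{j\le k}\widetilde Z_{jj}$ gives $[X,Y]=\widetilde Z$, with the chain closing because $\tr\widetilde Z=0$. Conjugating by $Q$ then gives $[QXQ^\dagger,QYQ^\dagger]=Z$.
\item[(ii)] The invariant $0\le\sum_{j\le k}\widetilde Z_{jj}\le 2\|Z\|$ maintained by Algorithm~\ref{alg: balance eigenvalue} makes each $v_k$ real with $0\le v_k\le\|Z\|^{1/2}$.
\item[(iii)] $Y$ is conjugate to $X$ by the diagonal unitary $\sum_k i^k\ket{k}\bra{k}$, so $\|X\|=\|Y\|$. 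Since $X$ is entrywise dominated by $\|Z\|^{1/2}$ times the adjacency matrix of the path on $d$ vertices, Perron--Frobenius gives $\|X\|\le 2\|Z\|^{1/2}\cos(\pi/(d+1))<2\|Z\|^{1/2}$.
\end{enumerate}
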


Before proving this main lemma, however, we shall first demonstrate the following auxiliary result.

\begin{lemma}\label{lm: balance eigenvalues}
Suppose $Z$ is a $d\times d$ Hermitian matrix in diagonal form.
Then Algorithm~\ref{alg: balance eigenvalue} converges in time $O(d)$.
\end{lemma}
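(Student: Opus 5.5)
The plan is to prove two things about Algorithm~\ref{alg: balance eigenvalue}: that it runs in time $O(d)$, and that its output permutation $\tau$ satisfies the advertised invariant $0\leq \sum_{j=1}^k Z[\tau(j),\tau(j)]\leq 2\|Z\|$ for every $k$. The invariant is what Algorithm~\ref{alg: cd} needs later, so it belongs in the proof even though the statement only mentions convergence.

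\emph{Running time.} Since $Z$ is already diagonal, reading off $\mu_1,\dots,\mu_d$ takes $O(d)$. The value $\|Z\|=\max_j|\mu_j|$ is computed once in $O(d)$ and then reused. Building the stacks $P$ and $N$ is a single pass, so $O(d)$. The main loop runs exactly $d$ times, and each iteration does $O(1)$ work: one comparison, one pop, one append and one addition. The loop never pops from an empty stack, because $|P|+|N|=d-k+1$ at iteration $k$; so whenever the preferred stack is empty, the other one is not. The final permutation step applies $\tau$ to the columns of $Q$. If this is done by index relabelling, or by permuting column pointers, it costs $O(d)$. If the columns are copied explicitly, it costs $O(d^2)$, which is still absorbed in the $O(d^3)$ bound of Lemma~\ref{lm: cd}. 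I will state the convention that this is a relabelling.

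\emph{Invariant.} Write $w_k$ for the partial sum after $k$ steps. Because $\tr Z=0$, we have $w_d=0$, and every $|\mu_j|\leq \|Z\|$. I will argue by induction that $-\|Z\|\le w_k\le 2\|Z\|$, and then sharpen the lower bound to $w_k\geq 0$.
\begin{itemize}
\item[(i)] If $w<\|Z\|$ and a nonnegative eigenvalue is added, the new sum is less than $2\|Z\|$.
\item[(ii)] If $w\ge\|Z\|$ and a negative eigenvalue is added, the new sum is at least $0$.
\item[(iii)] The delicate cases are the forced ones: $w<\|Z\|$ with $P$ empty, and $w\geq\|Z\|$ with $N$ empty.
\end{itemize}
If $P$ is empty, every remaining eigenvalue is negative. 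Their total is $w_d-w_{k-1}=-w_{k-1}$, so adding any subset of them keeps the partial sums between $0$ and $w_{k-1}$, provided $w_{k-1}\ge0$. If $N$ is empty, every remaining eigenvalue is nonnegative, so the partial sums increase monotonically to $w_d=0$. This forces $w_{k-1}\le 0$, contradicting $w_{k-1}\ge\|Z\|>0$, unless $Z=0$, which is trivial.

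Nonnegativity in case (i) is immediate, because the sum starts at $w\ge0$ and a nonnegative term is added. The upper bound in case (ii) is also immediate, because adding a negative term can only decrease $w$.

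The main obstacle is the forced-pop case (iii). The key observation there is $\tr Z=0$, that is, $w_d=0$. It guarantees that once one sign class is exhausted, the remaining terms all have the same sign and move the running sum monotonically toward $0$. That keeps it inside $[0,2\|Z\|]$.
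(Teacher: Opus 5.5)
Your proposal is correct and follows the paper's proof essentially step for step. Both run an induction on $0\le w_k\le 2\|Z\|$ with the case split $w_{k-1}<\|Z\|$ versus $w_{k-1}\ge\|Z\|$, handle the forced pops using $\sum_j\mu_j=0$ (each remaining negative eigenvalue is $\ge -w_{k-1}$, and an empty negative stack with $w_{k-1}\ge\|Z\|>0$ is impossible), and get $O(d)$ by relabelling column pointers. Your extra bookkeeping (computing $\|Z\|$, building the stacks, never popping an empty stack) only makes explicit what the paper leaves implicit; note that the hypothesis you actually use is $w_{k-1}\ge 0$, so the preliminary $-\|Z\|$ lower bound is unnecessary.
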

\begin{proof}
Because $Z$ is assumed in $\su(d)$,
\begin{equation}\label{eq: sum of eigenavalues}
\sum_j \mu_j = 0.
\end{equation}
The goal of the algorithm is to find a permutation of the eigenvalues of $Z$ for which, at each step (\ie, in each sublist of the permutation), the current sum of the eigenvalues does not exceed $2\|Z\|$.
Once we find such a permutation of eigenvalues, we apply the corresponding transformation to the columns of $Q$, basically reordering this diagonal matrix so that, again, the sum of the components up to each row is not larger than $2\|Z\|$.

Suppose that at the $k$-th step of the execution, $w_k = \sum_{j=1}^k \mu_{\tau(j)}$ where (with some abuse of notation) $\tau:\{1,\dots,d\}\to\{1,\dots,d\}$ is the permutation constructed so far, where we assume the invariant $0\leq w_{k-1}\leq 2\|Z\|$.
Equation~\eqref{eq: sum of eigenavalues} implies that the sum of the remaining eigenvalues not yet considered is equal to $-w_{k-1}$. 
Let $R$ be the set of indices of these remaining eigenvalues still to be considered.
We partition $R = R_{\geq 0}\cup R_{<0}$, where $R_{\geq 0}$ is the set of indices in $R$ representing the eigenvalues that are non-negative and $ R_{<0}$ is the set of indices representing the eigenvalues that are negative.

We now have to choose which value in $R$ we will assign to $w_k$.
We divide the decision cases. 
If $w_{k-1}< \|Z\|$, we look for a $\mu_j \geq 0$.
We know that $\mu_j\leq \|Z\|$, so $$w_k = w_{k-1}+\mu_j \leq 2\|Z\|.$$
If no such a $\mu_j$ is available, because $$\sum_{r\in R}\mu_r = -w_{k-1},$$ we have that $\mu_r \geq -w_{k-1}$.
Therefore, $$w_k = w_{k-1}+\mu_r\geq 0$$ for all $r\in R$.
Now, if $w_{k-1}\geq  \|Z\|$, either there exists a $-w_{k-1}<\mu_j<0$, or, if $ R_{< 0}=\emptyset$, then $ R_{\geq 0}$ is a list of zeros, which is impossible, as the remaining eigenvalues sum up to $-w_{k-1}<0$.

For the time complexity, we note that explicitly permuting the columns of $Q$ takes time $O(d^2)$, but if instead we simply redirect pointers of the columns, it can be accomplished in time $O(d)$.
\end{proof}

\begin{proof}[Proof of Lemma~\ref{lm: cd}]
    The correctness of the algorithm is shown in~\cite{kitaev2002classical}, but since we assume slightly different conventions, we will redo the demonstration here.
    For notation simplicity, we denote the input by $Z$ and by $\widetilde{Z}$ the same matrix in the basis of line 4; in particular, $Z=Q\widetilde{Z}Q^\dagger$. 
    
    One can easily verify the relation
    \begin{equation*}
        [X,Y]_{jk} = \begin{cases}
            v_j^2-v_{j-1}^2\text{ if } j=k\\
            0\text{ otherwise},
        \end{cases}
    \end{equation*}
    so that $[X,Y]=\widetilde{Z}$ where we use $v_k= \sqrt{\frac{1}{2}\sum_{j=1}^k\widetilde{Z_{jj}}}$.
    Because $$[QXQ^\dagger,QYQ^\dagger]=Q[X,Y]Q^\dagger =Q\widetilde{Z}Q^\dagger = Z$$
    the expected relation follows for the original basis as well.
    
    We now bound their norms.
    Since $X = A Y A^\dagger$, where $$A = \sum_{k=0}^{d-1} i^k \ket{k}\bra{k},$$ $\|X\|=\|Y\|$.
    By the Perron-Frobenius Theorem~\cite{meyer2023matrix}, $\|X\|\leq \|B\|$, where $$B= \sum_{k=0}^{d-2} \|Z\|^{1/2} \left( \ket{k}\bra{k+1} + \ket{k+1}\bra{k} \right).$$
    But $$\|B\|=2\|Z\|^{1/2}\cos\left(\frac{\pi}{d+1}\right)\leq 2\|Z\|^{1/2}$$ and the bound on $\|X\|$ follows.
    For the time complexity, it is enough to see that diagonalization is performed in $O(d^3)$.
\end{proof}

In theory, equation~\eqref{eq: theoretical contraction} already gives the sort of contraction on $\Delta_{t}$ that we need for the algorithm to converge, but we have a problem: although we do know the matrices $V$ and $W$ that we need, we do not yet have an implementation of them using instruction gates.
The trick is to recursively use SK to find these instructions as well, as shown in lines 6 and 7.
In particular, we will have, by recursion, that $V_{t-1}$ and $W_{t-1}$ approximate $V$ and $W$ with errors at most $\varepsilon_{t-1}$.
The surprising part is what follows from Lemma~\ref{lm: error contraction}: it is shown that
\begin{equation}\label{eq: multiplicative error compression}
    \|W^\dagger_{t-1} V^\dagger_{t-1}W_{t-1} V_{t-1}-W^\dagger V^\dagger W V\|< k\: \varepsilon_{t-1}^{3/2},
\end{equation}
where $k>0$ is a constant.
In other words, the whole algorithm works because we can use \emph{worse} approximations of the individual terms of the product $W^\dagger V^\dagger W V$ to \emph{improve} the approximation of the whole. 

 \begin{lemma}
 \label{lm: error contraction}
    Suppose $V_t, W_t$ are unitary approximations of $V$ and $W$ such that
    \begin{equation*}
        \|V-V_t\|,\|W-W_t\|\leq \varepsilon \text{ and } \|V-I\|,\|W-I\|\leq \delta,
    \end{equation*}
    then
    \begin{equation*}
        \|W^\dagger_{t} V^\dagger_{t}W_{t} V_{t}-W^\dagger V^\dagger W V\|\leq 8\varepsilon^2 + 8\varepsilon\delta + 4\varepsilon\delta^2 + 4\varepsilon^3 +\varepsilon^4.
    \end{equation*}
    
    In particular, let $\delta = k \sqrt{\varepsilon}$ where $k>0$ is a constant, $\varepsilon<1$ and
    \begin{equation*}
        \|V-I\|,\|W-I\|\leq k\sqrt{\varepsilon_{t-1}}.
    \end{equation*}
    Then
    \begin{equation*}
    \|W^\dagger_{t} V^\dagger_{t}W_{t} V_{t}-W^\dagger V^\dagger W V\|< (8k+4k^2+13)\varepsilon^{3/2}.
    \end{equation*}
 \end{lemma}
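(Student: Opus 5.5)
The plan is to write each approximate gate as the exact gate plus a small perturbation and expand the four-fold product. Set $V_t = V + E_V$ and $W_t = W + E_W$ with $\|E_V\|,\|E_W\|\le\varepsilon$. Since all four matrices are unitary, $V_t^\dagger = V^\dagger + E_V^\dagger$ and $W_t^\dagger = W^\dagger + E_W^\dagger$, with the same norm bounds. Expanding
\begin{equation*}
W_t^\dagger V_t^\dagger W_t V_t = (W^\dagger+E_W^\dagger)(V^\dagger+E_V^\dagger)(W+E_W)(V+E_V)
\end{equation*}
gives $W^\dagger V^\dagger W V$ plus $15$ terms, each containing at least one perturbation factor. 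We group these by the number $r$ of perturbation factors. There are $4$, $6$, $4$ and $1$ terms for $r=1,2,3,4$. If we bound every unperturbed factor by $1$ using unitarity, the terms with $r\ge 2$ already give at most $6\varepsilon^2+4\varepsilon^3+\varepsilon^4$. This lies within the claimed budget $8\varepsilon^2+4\varepsilon^3+\varepsilon^4$, with slack left over for the next step.

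The main obstacle is the first-order part. Bounded crudely it gives $4\varepsilon$, which is not $O(\varepsilon\delta)$. The key point is that at first order the errors cancel because $V,W$ are close to $I$. To exploit this, write $V=I+A$ and $W=I+B$ with $\|A\|,\|B\|\le\delta$. The first-order sum is
\begin{equation*}
E_W^\dagger V^\dagger W V + W^\dagger E_V^\dagger W V + W^\dagger V^\dagger E_W V + W^\dagger V^\dagger W E_V .
\end{equation*}
Replace every unperturbed factor by $I$ plus its deviation and expand. The zeroth-order part in $A,B$ is $E_W^\dagger+E_V^\dagger+E_W+E_V$, which does not vanish on its own. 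Unitarity supplies the fix: $E_V^\dagger V + V^\dagger E_V = -E_V^\dagger E_V$, obtained from $V_t^\dagger V_t = I$. Hence $E_V^\dagger + E_V = -E_V^\dagger E_V - E_V^\dagger A - A^\dagger E_V$, so this sum has size at most $\varepsilon^2+2\varepsilon\delta$, and the same bound holds for $W$. These $\varepsilon^2$ contributions take up the remaining $2\varepsilon^2$ of the quadratic budget. The terms linear in $A,B$ contribute $O(\varepsilon\delta)$, and the terms quadratic in $A,B$ contribute $O(\varepsilon\delta^2)$.

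The delicate bookkeeping is matching the constants $8\varepsilon\delta$ and $4\varepsilon\delta^2$. For this I would pair the terms so that each perturbation appears in a combination of the form $X^\dagger E X' + Y^\dagger E^\dagger Y'$ with $X,X',Y,Y'$ unitary, and telescope. For example, the $E_V$ terms pair as $W^\dagger(E_V^\dagger W V + V^\dagger W E_V)$. One then inserts $W V = V W + (WV - VW)$ and uses $\|WV-VW\|\le 2\delta^2$, since $[A,B]$ is the only obstruction. Counting then gives $8\varepsilon\delta$ from the linear deviations and $4\varepsilon\delta^2$ from the commutator corrections.

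For the second statement, substitute $\delta=k\sqrt\varepsilon$ with $\varepsilon<1$. This gives $8\varepsilon^2+8k\varepsilon^{3/2}+4k^2\varepsilon^2+4\varepsilon^3+\varepsilon^4$. Each power $\varepsilon^a$ with $a\ge 3/2$ is at most $\varepsilon^{3/2}$, so the total is at most $(8+8k+4k^2+4+1)\varepsilon^{3/2}=(8k+4k^2+13)\varepsilon^{3/2}$, with strict inequality since $\varepsilon<1$.
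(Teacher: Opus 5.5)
Your proposal is the paper's argument: the same expansion of the four-fold product, the same crude bound $6\varepsilon^2+4\varepsilon^3+\varepsilon^4$ for terms with at least two perturbations, and the same unitarity identity $E_W^\dagger W+W^\dagger E_W=-E_W^\dagger E_W$ (and its $V$ analogue) to cancel the first-order part up to $O(\varepsilon\delta)$. The one soft spot is the final tally, which you assert rather than carry out. The full expansion around $I$ that you describe first, bounded term by term, would give $16\varepsilon\delta+12\varepsilon\delta^2+4\varepsilon\delta^3$ and overshoot the stated constants. The paper needs no commutator insertion: in each pair it expands only the other gate, e.g.\ substituting $V=I+\delta_V$ in $E_W^\dagger V^\dagger WV+W^\dagger V^\dagger E_WV$ while keeping $W$ intact, so unitarity applies verbatim and each pair costs at most $\varepsilon^2+4\varepsilon\delta+2\varepsilon\delta^2$.
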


 \begin{proof}
    While the main idea of the proof comes from~\cite[Lemma 1]{dawson2005solovay}, we are more precise with bounds.
    Start by defining $\varepsilon_V, \varepsilon_W, \delta_V$, and $\delta_W$ as the matrices for which
    \begin{equation*}
    \begin{split}
       V_t = V + \varepsilon_V \;\;\; W_t = W + \varepsilon_W\\
       V = I + \delta_V \;\;\; W = I + \delta_W.
    \end{split}
    \end{equation*}
    We note that $\|\varepsilon_V\|,\|\varepsilon_W\|\leq \varepsilon$ and $\|\delta_V\|,\|\delta_W\|\leq \delta$.

    By expanding the commutator product, we see through direct computations that 
    \begin{equation*}
    \begin{split}
    \|W^\dagger_{t-1} V^\dagger_{t-1}W_{t-1} V_{t-1}-W^\dagger V^\dagger W V\| &= \|(W^\dagger+\varepsilon^\dagger_W)(V^\dagger+\varepsilon^\dagger_V)(W+\varepsilon_W)(V+\varepsilon_V)-W^\dagger V^\dagger W V\|\\
    &\leq \|\varepsilon_W^\dagger V^\dagger W V + W^\dagger V^\dagger \varepsilon_W V \| + \|W^\dagger \varepsilon_V^\dagger WV + W^\dagger V^\dagger W\varepsilon_V\| \\
    &\;\;+6\varepsilon^2+4\varepsilon^3+\varepsilon^4, 
    \end{split}
    \end{equation*}
    where we repeatedly applied the triangle inequality and item~\ref{it: unitary invariance} of Lemma~\ref{lm: properties operator norm}.
    We note, however, that
    \begin{equation*}
    \begin{split}
        \|\varepsilon_W^\dagger V^\dagger W V + W^\dagger V^\dagger \varepsilon_W V \|&=  \|\varepsilon_W^\dagger (I+\delta^\dagger_V) W (I+\delta_V) + W^\dagger (I+\delta^\dagger_V)\varepsilon_W (I+\delta_V) \|\\
        &\leq\|\varepsilon^\dagger_W W + W^\dagger \varepsilon_W\|+4\varepsilon\delta+2\varepsilon\delta^2,
    \end{split}
    \end{equation*}
    where we again used Lemma~\ref{lm: properties operator norm}.
    But, by unitarity of $W_t$
    \begin{equation*}
       I =  W_t^\dagger W_t=(W^\dagger+\varepsilon^\dagger_W)(W+\varepsilon_W) = I + \varepsilon^\dagger_W W + W^\dagger \varepsilon_W + \varepsilon^\dagger_W \varepsilon_W,
    \end{equation*}
    so
    \begin{equation*}
        \|\varepsilon_W^\dagger V^\dagger W V + W^\dagger V^\dagger \varepsilon_W V \| \leq \varepsilon^2 +4\varepsilon\delta + 2\varepsilon\delta^2.
    \end{equation*}
    Symmetrically,
    \begin{equation*}
        \|W^\dagger \varepsilon_V^\dagger WV + W^\dagger V^\dagger W\varepsilon_V\| \leq \varepsilon^2 +4\varepsilon\delta + 2\varepsilon\delta^2.
    \end{equation*}
    Therefore,
    \begin{equation*}
    \|W^\dagger_{t-1} V^\dagger_{t-1}W_{t-1} V_{t-1}-W^\dagger V^\dagger W V\| \leq 8\varepsilon^2 + 8\varepsilon\delta + 4\varepsilon\delta^2 + 4\varepsilon^3 +\varepsilon^4.
    \end{equation*}
    The second part follows directly by applying the hypothesis to the equation above.
 \end{proof}

We can now proceed to the SK theorem.
Here and throughout, we will denote by $T_0$, $S_0$, and $\ell_0$ the required time, space and expanded length of the circuit outputted by the base case subroutine of Algorithm~\ref{alg: sk} (lines 1 and 2).
Since these are defined by the net-search query, we recall that, as functions of $d$, both $S_0$ and $T_0$ scale as $O(d^2N_{\varepsilon_0})$, where $N_{\varepsilon_0}$ scales at least as $ 2^{\Omega(d^2)}$.

\begin{theorem}[Solovay-Kitaev]\label{th: sk full}
    Suppose $
     \varepsilon_0< \min\{1/c^2,\delta_0^2/32\pi\}$.
    Let $\{G_1,\dots,G_m\}$ be a universal instruction set for $\SU(d)$, and assume that words of length at most $\ell_0$ over this instruction set form an $\varepsilon_0$-net for $\SU(d)$.
    Take $U_1,\dots,U_p$ to be a list of unitaries in $\SU(d)$. 
    Then, using Algorithm~\ref{alg: sk},  one can construct circuits over $\{G_1,\dots,G_m\}$ approximating each $U_i$, $1\leq i \leq p$, up to error at most $\varepsilon$ in total time
    \begin{equation*}
        O((d^3+T_0)p\log^{k_t}(1/\varepsilon))
    \end{equation*}
 where the circuits' lengths are
 $$O(\ell_0\log^{k_\ell}(1/\varepsilon)),$$ with $k_t = \log(3)/\log(1.5)$, $k_\ell = \log(5)/\log(1.5)$.
  The working space, excluding the space needed to store the output straight-line program, is $O(S_0+\poly(d)\log\log(1/\varepsilon))$.
\end{theorem}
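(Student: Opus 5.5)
The plan is to argue by induction on the depth $t$ of Algorithm~\ref{alg: sk}. The invariant is that $\texttt{SOLOVAY KITAEV}(U,t)$ returns a word $U_t$ with $\|U_t-U\|\leq\varepsilon_t$, where $\varepsilon_t$ obeys a recursion of the form $\varepsilon_t\leq c\,\varepsilon_{t-1}^{3/2}$ for a universal constant $c$. Unrolling gives $c^2\varepsilon_t\leq (c^2\varepsilon_0)^{(3/2)^t}$. Since $c^2\varepsilon_0<1$ by hypothesis, choosing $t$ as in equation~\eqref{eq: t} makes $\varepsilon_t\leq\varepsilon$.

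The base case $t=0$ is the net-search hypothesis. For the inductive step, I set $\Delta=U^\dagger U_{t-1}$, so that $\|\Delta-I\|\leq\varepsilon_{t-1}\leq\varepsilon_0$ by item~\ref{it: unitary invariance} of Lemma~\ref{lm: properties operator norm}.
\begin{enumerate}
\item Since $\varepsilon_0$ is small, item~\ref{it: norm on log} of Lemma~\ref{lm: properties operator norm} gives $\|\log\Delta\|\leq\frac{\pi}{2}\varepsilon_{t-1}$.
\item Lemma~\ref{lm: cd} then yields $V=e^{iX}$ and $W=e^{iY}$ with $[X,Y]=\pm\log\Delta$ (sign chosen to invert the residual) and $\|X\|,\|Y\|\leq 2\sqrt{\pi\varepsilon_{t-1}/2}$.
\item Item~\ref{it: exponent approximation} gives $\|V-I\|,\|W-I\|\leq k\sqrt{\varepsilon_{t-1}}$ with $k=\sqrt{2\pi}$.
\item Apply BCH, Lemma~\ref{lm: bch}, to the four-factor product $e^{-iY}e^{-iX}e^{iY}e^{iX}$, with total norm $\delta\leq 4\sqrt{2\pi\varepsilon_{t-1}}\leq\delta_0$. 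This is exactly where the hypothesis $\varepsilon_0<\delta_0^2/32\pi$ enters. The linear terms cancel, and the pairwise commutator sum collapses to a multiple of $[X,Y]$. I must check that the factor matches the normalization $[X,Y]=i(XY-YX)$ so that it equals $\log\Delta^\dagger$. Via equation~\eqref{eq: bch}, the remainder is then $O(\varepsilon_{t-1}^{3/2})$.
\item The recursive calls give $V_{t-1},W_{t-1}$ within $\varepsilon_{t-1}$ of $V,W$. The second part of Lemma~\ref{lm: error contraction} bounds the group-commutator error by $(8k+4k^2+13)\varepsilon_{t-1}^{3/2}$.
\item Combining the last two bounds through equation~\eqref{eq: pre triangle inequality} closes the induction with an explicit $c$.
\end{enumerate}
The main obstacle is step 4: verifying that the BCH second-order term produces $\log\Delta^\dagger$ exactly with the paper's $i$-convention, and tracking constants so that $c$ is universal and independent of $d$.

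For complexity, let $T(t)$ be the running time at depth $t$. It satisfies $T(t)=3T(t-1)+O(d^3)$. The $O(d^3)$ term covers the logarithm and diagonalization (Lemma~\ref{lm: cd}) and a constant number of $d\times d$ multiplications, with matrix values of subcircuits carried along. Hence $T(t)=O(3^t(T_0+d^3))$.

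For length, $\ell(t)=5\ell(t-1)$, so $\ell(t)=5^t\ell_0$. This uses the convention that the inverses $W_{t-1}^\dagger,V_{t-1}^\dagger$ are words of the same length, by closure of the instruction set under inverses.

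From equation~\eqref{eq: t}, $(3/2)^t=O(\log(1/\varepsilon))$, since $\log(\varepsilon_0c^2)$ is a constant. Therefore $3^t=O(\log^{k_t}(1/\varepsilon))$ and $5^t=O(\log^{k_\ell}(1/\varepsilon))$ with the stated exponents. Running the algorithm independently on the $p$ targets multiplies the time by $p$.

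For space, the recursion is depth-first. The stack has depth $t=O(\log\log(1/\varepsilon))$, and each frame stores $O(1)$ matrices of size $O(d^2)$, giving $\poly(d)\log\log(1/\varepsilon)$. On top of this comes the net storage $S_0$, with output written as a straight-line program outside the working space.
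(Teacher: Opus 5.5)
Your argument is the paper's proof essentially line by line. It uses the same induction on $t$, the same $k=\sqrt{2\pi}$ fed into Lemma~\ref{lm: error contraction}, and the same use of $\varepsilon_0<\delta_0^2/32\pi$ to put the four-factor product in the BCH regime. It arrives at the same $c=8\sqrt{2\pi}+8\pi+13+(32\pi)^{3/2}C$ and the same $3^t$ and $5^t$ recurrences. Your depth-first space count actually supplies something the paper's proof omits.

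The sign you worry about in step 4 needs no adjustment. With $(X_1,\dots,X_4)=(-Y,-X,Y,X)$ one gets $\frac12\sum_{i<j}[X_i,X_j]=-[X,Y]$, and Lemma~\ref{lm: bch} is consistent with $[A,B]=i(AB-BA)$. So the output $[X,Y]=\log\Delta$ of Algorithm~\ref{alg: cd} gives $\log(W^\dagger V^\dagger WV)=-\log\Delta+R$. Item~\ref{it: exponent approximation} then yields $\|W^\dagger V^\dagger WV-\Delta^\dagger\|\leq\|R\|\leq(32\pi)^{3/2}C\varepsilon_{t-1}^{3/2}$. Inequality~\eqref{eq: bch} is not the right tool there, since it compares the product with $\exp(i\sum_j X_j)=I$.

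The step that does not follow as written is step 1, and the paper's proof has exactly the same hole, covered by ``by the choice of $\varepsilon_0$''.
\begin{itemize}
\item Item~\ref{it: norm on log} requires $\|\log\Delta\|\leq\pi$, i.e.\ that the traceless logarithm of Lemma~\ref{lm: new bound} is the principal one.
\item Smallness of $\|\Delta-I\|$ below a $d$-independent $\varepsilon_0$ does not give this. The element $\Delta=e^{2\pi i/d}I$ lies in $\SU(d)$ with $\|\Delta-I\|=2\sin(\pi/d)$, yet every traceless logarithm of it has norm at least $2\pi(1-1/d)$.
\item Hence once $d\gtrsim 2\pi/\varepsilon_0$, a legitimate net answer can produce a residual for which your inequality $\|\log\Delta\|\leq\frac{\pi}{2}\varepsilon_{t-1}$ is false.
\item Algorithm~\ref{alg: cd} is then fed a logarithm of norm close to $2\pi$, so $V$ and $W$ are not close to $I$. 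Neither Lemma~\ref{lm: bch} nor Lemma~\ref{lm: error contraction} then gives the $\varepsilon_{t-1}^{3/2}$ contraction.
\end{itemize}

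The repair is to add the hypothesis $\varepsilon_0<2\sin(\pi/d)$ (e.g.\ $\varepsilon_0<4/d$) and reuse the eigenvalue-angle argument from the proof of Theorem~\ref{th: GM}. Every principal angle of $\Delta$ then satisfies $|\widetilde{\theta}_j|<2\pi/d$, so the principal logarithm is traceless, coincides with $\log\Delta$, and has norm below $\pi$. Because $\varepsilon_{t-1}\leq\varepsilon_0$, this makes step 1 valid at every depth. The stated bounds are expressed through $T_0$, $S_0$ and $\ell_0$, and a smaller $\varepsilon_0$ only decreases $t$, so the complexities keep their form.
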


\begin{proof}
Assume $p=1$, the case for larger $p$ can be derived by repeated applications of this base case.
If $\varepsilon\geq \varepsilon_0$, use $t=0$ and we are done.
Otherwise, let $t$ be as in equation~\eqref{eq: t}.
Let $\Delta_t$ be the residual of $U_tU^\dagger$ at depth $t$, we will show that
\begin{equation*}
    \varepsilon_t = \|W_{t-1}^\dagger V_{t-1}^\dagger W_{t-1} V_{t-1} - \Delta_{t-1}^\dagger\|< c\: \varepsilon_{t-1}^{3/2},
\end{equation*}
which naturally implies convergence, since $\varepsilon_0 < 1/c^2$.

Start writing
\begin{equation*}
    \|W_{t-1}^\dagger V_{t-1}^\dagger W_{t-1} V_{t-1} - \Delta^\dagger_t\| \leq  \|W_{t-1}^\dagger V_{t-1}^\dagger W_{t-1} V_{t-1} -W^\dagger V^\dagger W V\|+\|W^\dagger V^\dagger W V-  \Delta^\dagger_t\|.
\end{equation*}
Note that, because $V$ and $W$ are outputs of Algorithm~\ref{alg: cd}, by the choice of $\varepsilon_0$,
\begin{equation*}
     \|V-I\| \leq \|\log V - \log I\| = \|\log V\|< 2 \sqrt{\|\log \Delta_{t-1}\|}\leq \sqrt{2\pi \varepsilon_{t-1}} 
\end{equation*}
by items~\ref{it: exponent approximation} and \ref{it: norm on log} of Lemma~\ref{lm: properties operator norm} and similarly for $W$.
Consequently, Lemma~\ref{lm: error contraction} applies with $\delta = \sqrt{2\pi \varepsilon_{t-1}}$ and $\varepsilon=\varepsilon_{t-1}$, and the first term of the right-hand side is bounded by $(8\sqrt{2\pi}+8\pi+13)\varepsilon_{t-1}^{3/2}$.

For the second term, we note that
$$
2\|\log V\|+2\|\log W\|
\leq
4\sqrt{2\pi\varepsilon_{t-1}}
\leq
4\sqrt{2\pi\varepsilon_0}
\leq
\delta_0,
$$
so Lemma~\ref{lm: bch} applies to the four factors defining the group
commutator. Therefore,
\begin{equation*}
\begin{split}
    \|W^\dagger V^\dagger W V - \Delta^\dagger_{t-1}\| & \leq (32\pi)^{3/2} C \varepsilon_{t-1}^{3/2}.
\end{split}
\end{equation*}
Letting $c = 8\sqrt{2\pi}+8\pi+13 + (32\pi)^{3/2}C$ and $\varepsilon_0 < 1/c^2$ enforces convergence.
In particular, for $t$ as in equation~\eqref{eq: t}, $\varepsilon_t\leq \varepsilon$. 

At each step, we note that, by recursion, $\ell_{t}\leq 5 \ell_{t-1}$, so 
$$\ell_t = O(\ell_0\log^{k_\ell}(1/\varepsilon)),$$
for $t$ as in equation~\eqref{eq: t}.
For the time complexity, we note that each step takes about three times more than the previous. 
Therefore, the total execution time is about $O(3^t (T_0+d^3))$, where the dependency in $d$ comes from Algorithm~\ref{alg: cd}, and $T_0$ is the time to execute \texttt{NET SEARCH}.
\end{proof}

\section{Lifting the curse of dimensionality in the Solovay-Kitaev's net queries}\label{sec: lifting}

\subsection{Net-free Solovay-Kitaev with trotterization}\label{sec: general algorithm}
Referring back to Algorithm~\ref{alg: sk}, we note that the \texttt{NET SEARCH} subroutine, whose objective is to find a circuit approximation of the current target within error $\varepsilon_0$, is the only place where $\mathcal{N}_{\varepsilon_0}$ is explicitly needed.
In this section, we describe a substitution of this particular step by a net-free approximation of the target.

To start, suppose that the Hermitian matrices $\{H_j\}_{j=1}^M$ form a basis for $\su(d)$, seen as a vector space---in particular, we can assume $M=d^2-1$.
We call a set of unitaries
\begin{equation}\label{eq: exponential basis}
    \calG_j = \exp(i H_j), \calG_j^\dagger = \exp(-i H_j) \quad\text{ for }1\leq j \leq M
\end{equation}
an \emph{exponential basis} for the Lie group $\SU(d)$.
For convenience, we will often omit the inverse $\calG_j^\dagger$ from exponential basis, although they should be always assumed to be present; in particular we denote solely by $\{\calG_j\}_{j=1}^M$ the basis of equation~\eqref{eq: exponential basis}. 
The rotation matrices
\begin{equation}\label{eq: pauli rot}
    R_X=e^{i\theta_X X}\quad R_Y=e^{i\theta_Y Y}\quad R_Z=e^{i\theta_Z Z},
\end{equation}
where $X, Y$ and $Z$ are the (Hermitian) Pauli matrices and $\theta_X, \theta_Y$, and $\theta_Z$ are some angles in $(0,\pi)$, form a perfectly fine example of an exponential basis for $\SU(2)$.

\emph{Product formulas}~\cite{childs2021theory} use exponential basis to approximate a target within arbitrary error $\varepsilon$ and have consequently received significant interest from the quantum computing community for their applications to the simulation of quantum systems~\cite{babbush2015chemical, childs2018toward, lloyd1996universal, wecker2015solving}.
Historically, the first product formula is what physicists know as \emph{trotterization}.
Explicitly, if $\{\calG_j\}_{j=1}^M=\{\exp(iH_j)\}_{j=1}^M$ forms an exponential basis, then there exist some real coefficients $\alpha_j$ such that
\begin{equation}\label{eq: basis decomp}
    \frac{H}{n} =  \sum_j \alpha_j H_j,
\end{equation}
where $n$ here is an integer chosen to ensure the BCH regime.
Defining the unitary
\begin{equation}\label{eq: trotterization}
    U^{(n)}=\left(\prod_{j=1}^M\exp\left(i\alpha_jH_j\right)\right)^n,
\end{equation}
and directly applying Lemma~\ref{lm: bch} implies that
\begin{equation*}
    \| U^{(n)}-U\|\leq  \frac{1}{n} \sum_{i<j}  \: |\alpha_i||\alpha_j| \:\|H_i\|\: \|H_j\| + O\left(\frac{1}{n^2}\right);
\end{equation*}
that is, by picking 
\begin{equation*}
n = \Omega\left(\frac{1}{\varepsilon_0}\sum_{i<j}  |\alpha_i||\alpha_j| \|H_i\|  \|H_j\|\right),
\end{equation*}
the sequence of unitaries in equation~\eqref{eq: trotterization} can be used as a substitute to \texttt{NET SEARCH} in Algorithm~\ref{alg: sk}.
Note, on the other hand, that the number of gates used in this approximation scheme scales linearly with $1/\varepsilon$, so, by itself, trotterization is unable to achieve the degree of efficiency, measured in circuit length, of the SK algorithm.

\begin{remark}
    Product formulas were later expanded beyond trotterization to include Suzuki formulas~\cite{suzuki1991general}, but we will only briefly discuss this alternative method in Section~\ref{sec: conclusion}.
\end{remark}

\begin{remark}
    Geometrically speaking, we note that product formulas serve as a \emph{numerical integration method} of the one-parameter subgroup $\gamma(t)=\exp(iHt)$, where $H$ is the logarithm of the target.
    The term ``integration" here is used to draw a parallel with numerical integration techniques to solve differential equations.
    Just as the Euler method approximates, with discrete steps, the smooth path drawn by the solution of an ordinary differential equation in configuration space, the idea here is to represent $\gamma$, which, in this case, is a geodesic of a bi-invariant metric, as a sequence of applications of instruction gates.
    In this sense, the choice to use product formulas is only but a convenience: any other geodesic of a metric in $\SU(d)$ that can be efficiently approximated by a product of gates would work.

    Our approach should be then contrasted with the one taken by Nielsen and collaborators~\cite{nielsen2006geometric, nielsen2006optimal, nielsen2006quantum}.
    There, instead of an easy-to-integrate metric such as bi-invariant ones, they search for some Riemannian metric (or, more generally, for a right-invariant cost function) that can be used to bound $\ell$ in some meaningful way.
    In this case, the metric works as a lower bound on the number of instruction gates needed to implement a unitary within a given accuracy.
    This methodology, is better aligned with the field of \emph{optimal geometric control} or, using more modern terminology, of \emph{optimal quantum control}~\cite{d2021introduction, dong2010quantum}, but, unfortunately, their constructions are usually non-explicit, meaning that although they do find a bound on $\ell$, it is usually not trivial to get the explicit sequence of instruction gates directly.
    It would be interesting to find an example of cost that bounds $\ell$, but is, at the same time, numerically integrable.
\end{remark}

The problem with trotterization, however, is that equation~\eqref{eq: basis decomp} necessarily assumes the coefficients $\alpha_j$ to be real numbers, where, in practice, even if we suppose to have access to physical realizations of $\{\calG_j\}_{j=1}^M$ as gates, only integer powers
$$
\exp(i c_j H_j)=\calG_j^{c_j},
\qquad c_j\in\mathbb Z,
$$
are directly implementable in a quantum circuit (negative powers are assumed to be implemented using the inverse gates).
That is, we need a \emph{discrete version} of trotterization.
This approach is summarized in Algorithm~\ref{alg: depth zero}.

\begin{algorithm}[H]
\setcounter{AlgoLine}{0}
\caption{Discrete trotterization}
\label{alg: depth zero}
\KwIn{A target matrix $U=\exp(iH)$.}
\KwRequire{An exponential basis of $\SU(d)$, $\{\mathcal{G}_j\}_{j=1}^M$, with $H_j=\log \mathcal{G}_j$ and a positive integer $n$.}

\KwOut{A sequence of powers of $\mathcal{G}_j$ that approximates $U$.}

\textbf{function} $\texttt{NET FREE SEARCH}(U, n):$

$P \gets [H_1 \; H_2 \; \dots \; H_M]$\;

$\alpha_j \gets \texttt{BASIS CHANGE}(P, H/n)$ \tcp*{Write $H/n$ in the basis $\{H_j\}$} 

\For{$j \gets 1$ \KwTo $M$}{
$c_j\gets \texttt{round}(\alpha_j)$
}

\Return $U^{(n)} = \left[\prod_{j=1}^M \mathcal{G}_j^{c_j}\right]^n$

\end{algorithm}

The algorithm takes as input a target unitary matrix $U=\exp(iH)$, assumed to be given in the usual basis for the $d\times d$ matrices, $M_d(\C)$.
Line 3 of Algorithm~\ref{alg: depth zero} uses simple linear algebra to compute the real coefficients $\alpha_j$ by applying the precomputed change-of-basis matrix from the fixed coordinates of $\mathfrak{su}(d)$ inherited from $M_d(\C)$ to the basis $\{H_j\}_{j=1}^M$.
Because instead of the $\alpha_j$, we use the closest integers $c_j$, we note that
\begin{equation}\label{eq: c minus alpha}
    \|\alpha_j H_j-c_j H_j\|\leq \frac{1}{2} \|H_j\|,
\end{equation}
so, if we can make $\|H_j\|$ small enough, we can control the error introduced by approximating the real coefficients with integers only.
Unfortunately, naively setting 
$h_{\max}=\max_j \|H_j\|\leq {\varepsilon_0}/{M}$
is usually not enough: at each of the $n$ steps of the form $\prod_{j=1}^M \calG_j^{c_j}$, the error due to the discretization of coefficients can accumulate, creating what we call a \emph{drift}.
This means that, so the drift is guaranteed not to grow too much, we actually need to enforce 
$h_{\max}\leq {\varepsilon_0}/{(nM)}$,
as we will see in Proposition~\ref{prop: depth zero}.
Since $n$ will be large (equation~\eqref{eq: n}), $h_{\max}$ will tend to be too small.
We note, on the other hand, that these bounds on $h_{\max}$ might be avoidable if we allow the coefficients $c_j$ to vary at each step, so that the drift portion of the error never grows too fast.
We return to this possibility in the conclusion.

Careful analysis shows that the change-of-basis preprocessing costs $O(M^3)=O(d^6)$ time and 
$O(M^2)=O(d^4)$ space.
After this preprocessing, each call to the base routine costs $O(d^3+M^2)=O(d^4)$ time in straight-line output representation.
If, instead, we require an uncompressed sequence, the execution time is the often larger $O\left(n\sum_j|c_j|\right)$.
Here and throughout, we will take $M=d^2-1$ and recall that $\varepsilon_0\leq \delta_0$.

\begin{proposition}\label{prop: depth zero}
    Suppose $\{\calG_j\}_{j=1}^M$ is a fixed exponential basis for $\SU(d)$.
    Define
    \begin{equation}\label{eq: def mu}
         \mu = \inf_{{\alpha}\neq 0}\frac{\|\sum_{j=1}^M \alpha_j H_j\|}{\;\;\;\|{\alpha}\|_\infty}
    \end{equation}
    where $\alpha=(\alpha_1,\dots,\alpha_M)$ is a vector of coefficients in $\R^M$, 
    \begin{equation}\label{eq: def dimensionless gain}
        \mu_N = \frac{\mu}{h_{\max}},
    \end{equation}
    and $h_{\max} = \max_j \|H_j\|$.
    For some fixed value of $\varepsilon_0$, let $n$ be a positive integer such that 
    \begin{equation}\label{eq: n}
        n\geq\frac{\pi^2M^2K}{\varepsilon_0\mu_N^2},
    \end{equation}
    where $K$ is some universal constant.
    Call $U^{(n)}$ the output of Algorithm \ref{alg: depth zero} for some input $U\in\SU(d)$ using the basis $\{\calG_j\}_{j=1}^M$.
    If
    \begin{equation}\label{eq: bound h_max prop}
        h_{\max}\leq \frac{\varepsilon_0}{nM},
    \end{equation}
    then $\|U^{(n)} -U\|\leq\varepsilon_0$.
\end{proposition}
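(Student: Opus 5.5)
Write $H=\log U$ as in Lemma~\ref{lm: new bound}, so $\|H\|\le 2\pi$, and let $\alpha_j$ be the real coefficients with $H/n=\sum_j\alpha_jH_j$. Let $c_j=\texttt{round}(\alpha_j)$. The plan is to split the error through the intermediate matrix $\exp(i\sum_j c_jH_j)^n$:
\begin{equation*}
\|U^{(n)}-U\|\le \Bigl\|\Bigl(\prod_j \exp(ic_jH_j)\Bigr)^n-\exp\Bigl(in\sum_j c_jH_j\Bigr)\Bigr\| + \Bigl\|\exp\Bigl(in\sum_j c_jH_j\Bigr)-\exp(iH)\Bigr\|.
\end{equation*}
The first term is the trotterization (BCH) error. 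The second term is the drift coming from rounding.

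\textbf{Drift term.} By item~\ref{it: exponent approximation} of Lemma~\ref{lm: properties operator norm}, the second term is at most $n\|\sum_j(c_j-\alpha_j)H_j\|$. By equation~\eqref{eq: c minus alpha} this is at most $\frac{n}{2}\sum_j\|H_j\|\le \frac{nMh_{\max}}{2}$. Hypothesis~\eqref{eq: bound h_max prop} then bounds it by $\varepsilon_0/2$.

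\textbf{Trotter term.} Telescoping a product of $n$ unitaries, together with item~\ref{it: unitary invariance}, bounds the first term by $n$ times the single-step error $\|\prod_j\exp(ic_jH_j)-\exp(i\sum_jc_jH_j)\|$. Equation~\eqref{eq: bch} bounds this single-step error by $\frac12\sum_{i<j}\|[c_iH_i,c_jH_j]\|+C\delta^3$, where $\delta=\sum_j|c_j|\|H_j\|$.

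To control $\delta$, I would use the definition of $\mu$ in~\eqref{eq: def mu}. It gives $\|\alpha\|_\infty\le \|H/n\|/\mu\le 2\pi/(n\mu)$, hence $|c_j|\le |\alpha_j|+\tfrac12$. Each commutator satisfies $\|[A,B]\|\le 2\|A\|\|B\|$, so the single-step BCH error is $O\bigl((\sum_j|c_j|h_{\max})^2\bigr)=O\bigl(M^2h_{\max}^2(\|\alpha\|_\infty+1)^2\bigr)$. Now $h_{\max}\|\alpha\|_\infty\le 2\pi/(n\mu_N)$, and $h_{\max}\le \varepsilon_0/(nM)$. Multiplying by $n$, the total Trotter error is of order
\begin{equation*}
n\,M^2\Bigl(\frac{2\pi}{n\mu_N}+h_{\max}\Bigr)^2\lesssim \frac{\pi^2M^2}{n\mu_N^2}+\frac{\varepsilon_0^2}{n}+\text{cross terms}.
\end{equation*}
Choosing $n$ as in~\eqref{eq: n}, with $K$ absorbing $C$ and the numerical constants, makes this at most $\varepsilon_0/2$.

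\textbf{Obstacle.} The main difficulty is verifying that BCH applies, i.e.\ that $\delta\le\delta_0$. This amounts to $M(2\pi/(n\mu_N)+h_{\max})\le\delta_0$. The second summand is at most $\varepsilon_0/n\le\delta_0/2$. The first follows from~\eqref{eq: n}: it gives $2\pi M/(n\mu_N)\le 2\varepsilon_0\mu_N/(\pi MK)$, and $\mu_N\le 1$ since $\mu\le\|H_j\|/1$ for each basis vector. Taking $K$ large enough, and using $\varepsilon_0\le\delta_0$, this is at most $\delta_0/2$.

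Care is also needed with the cross terms in the expansion above, which mix the $\alpha$-part and the rounding part of $c_j$. I expect each of them to be bounded by $\varepsilon_0$ times a small factor, again using~\eqref{eq: n} and~\eqref{eq: bound h_max prop}. The cubic remainder $C\delta^3$ is dominated by $\delta^2$, since $\delta\le\delta_0<1$. Summing the two halves gives $\|U^{(n)}-U\|\le\varepsilon_0$.
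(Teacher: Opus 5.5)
Your proposal is correct and follows essentially the paper's argument. It telescopes over the $n$ Trotter steps, bounds the rounding drift by $nMh_{\max}/2\le\varepsilon_0/2$ via item~\ref{it: exponent approximation} of Lemma~\ref{lm: properties operator norm}, and controls the BCH term through $\|\alpha\|_\infty\le\|H\|/(n\mu)$ and the lower bound~\eqref{eq: n}. The only differences are cosmetic: you split at the level of $n$-th powers rather than per step, and you use $|c_j|\le|\alpha_j|+\tfrac12$ with the extra half absorbed by~\eqref{eq: bound h_max prop}. The paper instead uses $|c_j|\le 2\|H\|/(n\mu)$, which holds since $c_j=0$ whenever $|\alpha_j|<\tfrac12$, and this avoids your cross terms.
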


Before we proceed to the proof, some comments are due.
The variable $\mu$ is called the \emph{minimum gain} of the basis $\{H_j\}_{j=1}^M$ and can be seen as a generalization of the usual term in numerical linear algebra: it basically measures how large the coefficients of a linear combination of the basis $\{H_j\}_{j=1}^M$ need to be to be able to reach any \emph{direction} in the vector space.
For example, if one used a Hilbert space norm on $\mathfrak{su}(d)$ and measured coefficients in $\ell_2$, the analogous minimum gain would be the smallest singular value of $A$, or the square root of the smallest eigenvalue of the corresponding Gram matrix~\cite{horn2012matrix}.
In particular, the more all basis vectors in $\{H_j\}_{j=1}^M$ point in the same direction, the smaller $\mu$ will be.
This notion can be made a little bit more rigorous thanks to the simple lemma below.

\begin{lemma}\label{lm: mu larger than zero}
A set of $M=d^2-1$ Hermitian traceless matrices $\{H_j\}_{j=1}^M$ forms a basis of $\su(d)$ if and only if $\mu>0$.
\end{lemma}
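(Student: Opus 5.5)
The plan is to view $\mu$ as the minimum over a compact set of a continuous function, and to reduce both directions to linear (in)dependence of the $H_j$ over $\R$. Define the real-linear map $A:\R^M\to\su(d)$ by $A(\alpha)=\sum_{j=1}^M\alpha_jH_j$. The ratio in equation~\eqref{eq: def mu} is invariant under $\alpha\mapsto t\alpha$ for $t\neq0$. Hence
\[
\mu=\inf_{\|\alpha\|_\infty=1}\|A(\alpha)\|.
\]
The unit sphere of $\|\cdot\|_\infty$ in $\R^M$ is compact. The map $\alpha\mapsto\|A(\alpha)\|$ is continuous, being the composition of a linear map with the operator norm. So the infimum is attained at some $\alpha^*$ with $\|\alpha^*\|_\infty=1$, and $\mu=\|A(\alpha^*)\|$.

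For the direction ``basis $\Rightarrow\mu>0$'', suppose $\{H_j\}$ is a basis. Then $A$ is injective, so $A(\alpha^*)\neq0$ because $\alpha^*\neq0$. The operator norm is a genuine norm, so $\mu=\|A(\alpha^*)\|>0$.

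For the converse, suppose $\mu>0$. Then for every $\alpha\neq0$ we have $\|A(\alpha)\|\geq\mu\|\alpha\|_\infty>0$, so $A$ is injective and the $H_j$ are linearly independent over $\R$. There are $M=d^2-1$ of them, and $\dim_\R\su(d)=d^2-1$, so they span $\su(d)$ and form a basis. Equivalently, if they were not a basis, a nontrivial vanishing combination would give ratio $0$, forcing $\mu=0$.

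The only point needing care is attainment of the infimum. Without compactness, ``$\mu>0$'' would not follow from injectivity alone. This is handled by the normalization to the $\ell_\infty$ unit sphere, and it uses that $\R^M$ is finite-dimensional. The dimension count $\dim_\R\su(d)=d^2-1$ was stated earlier, where $\su(d)$ is identified with the traceless Hermitian matrices. I do not expect any step to be a serious obstacle.
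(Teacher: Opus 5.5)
Your proof is correct and follows the paper's argument. The paper gets ``not a basis $\Rightarrow \mu=0$'' from a nontrivial vanishing combination, and ``basis $\Rightarrow \mu>0$'' by minimizing $\|A\alpha\|$ over the compact $\ell_\infty$ unit sphere and using homogeneity, exactly as you do. The only difference is that you make explicit the dimension count $\dim_\R\su(d)=d^2-1$, which the paper uses silently when it treats ``not a basis'' as ``linearly dependent''.
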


\begin{proof}
If the $H_j$'s do not form a basis, then they are linearly dependent. Hence there exists $\alpha\neq 0$ such that
$$
\sum_{j=1}^M\alpha_jH_j=0.
$$
Therefore,
$$
0\leq \mu
\leq
\frac{
\left\|\sum_{j=1}^M\alpha_jH_j\right\|
}{
\|\alpha\|_\infty
}
=0,
$$
so $\mu=0$.

Conversely, suppose that $\{H_j\}_{j=1}^M$ forms a basis.
Then the linear map
$$
    A:\mathbb R^M\to\mathfrak{su}(d),
    \qquad
    A\alpha=\sum_{j=1}^M\alpha_jH_j,
$$
is injective. 
The function $$\alpha\mapsto \|A\alpha\|$$ is continuous, and the set
$$
    \{\alpha\in\R^M|\|\alpha\|_\infty=1\}
$$
is compact.
Since $A\alpha\neq 0$ on this set, the minimum of $\|A\alpha\|$ over it is strictly positive. By homogeneity, this minimum is exactly $\mu$.
\end{proof}

The normalized quantity $\mu_N=\mu/h_{\max}$ removes the overall scale of the basis and measures only its conditioning.
Note that $0< \mu_N\leq 1$ and that $\min_{j}\|H_j\|=h_{\max}$ is a necessary but not sufficient conditions for this maximum to be achieved.

\begin{proof}[Proof of Proposition~\ref{prop: depth zero}]
For the sake of notation simplicity, let
$$
S=\prod_{j=1}^M e^{ic_jH_j},
\qquad
T=e^{iH/n}
=
\exp\left(i\sum_{j=1}^M\alpha_jH_j\right).
$$
In particular, the output of Algorithm~\ref{alg: depth zero} is $U^{(n)}=S^n$, while $U=T^n$.
By telescoping and unitary invariance of the operator norm,
\begin{equation*}
    \begin{split}
\|U^{(n)}-U\|
&=
\|S^n-T^n\| \\
&\leq
\sum_{r=0}^{n-1}
\|S^{n-r-1}(S-T)T^r\| \\
&\leq
n\|S-T\|.
\end{split}
\end{equation*}

We decompose the one-step error as
\begin{equation*}
    \begin{split}
\|S-T\|
&\leq
\left\|
\prod_{j=1}^M e^{ic_jH_j}
-
\exp\left(i\sum_{j=1}^M c_jH_j\right)
\right\| 
+
\left\|
\exp\left(i\sum_{j=1}^M c_jH_j\right)
-
\exp\left(i\sum_{j=1}^M \alpha_jH_j\right)
\right\|.
\end{split}
\end{equation*}

The second term corresponds to the drift error and is bounded by
\begin{equation*}
\begin{split}
\left\|
\exp\left(i\sum_{j=1}^M c_jH_j\right)
-
\exp\left(i\sum_{j=1}^M \alpha_jH_j\right)
\right\|
&\leq
\left\|
\sum_{j=1}^M(c_j-\alpha_j)H_j
\right\| \\
&\leq
\sum_{j=1}^M |c_j-\alpha_j|\|H_j\| \\
&\leq
\frac{Mh_{\max}}2\\
&
\leq
\frac{\varepsilon_0}{2n},
\end{split}
\end{equation*}
where the last inequality uses $h_{\max}\leq \varepsilon_0/(nM)$.

It remains to bound the product-formula error. By the definition of $\mu$, since
$$
\sum_{j=1}^M\alpha_jH_j=\frac{H}{n},
$$
we have
$$\|\alpha\|_\infty
\leq
\frac{\|H\|}{n\mu}.$$
Moreover, because $c_j$ is obtained, for each $1\leq j\leq M$, by nearest-integer rounding, assuming $|\alpha_j|\geq 1/2$,
$$|c_j|\leq \frac{\|H\| }{n \mu}+\frac{1}{2}<\frac{2\|H\| }{n \mu}.$$
This is a deliberately wasteful step: it ignores cancellations and the fact that many small coefficients round to zero. The resulting bound should therefore be understood as a conservative sufficient condition for Algorithm~\ref{alg: depth zero} to be an \(\varepsilon_0\)-base routine, not as an optimized analysis of the discretization error.

Still, we have that
$$
\delta
=
\sum_{j=1}^M\|c_jH_j\|
\leq Mh_{\max} \times \frac{2\|H\|}{n\mu}\leq
\frac{2M\|H\|}{n\mu_N}
\leq
\frac{4M\pi}{n\mu_N}.$$
By the assumed lower bound on $n$, this quantity is at most $\delta_0$, so BCH applies.
Using that, for any $A$ and $B$ matrices, $\|[A,B]\|\leq 2\|A\|\|B\|$,
we obtain
\begin{equation*}
\begin{split}
\left\|
\prod_{j=1}^M e^{ic_jH_j}
-
\exp\left(i\sum_{j=1}^M c_jH_j\right)
\right\|
&\leq
\frac{1}{2}
\sum_{i<j}\|[c_iH_i,c_jH_j\|
+
C\delta^3 \\
&\leq
\sum_{i<j}\|c_iH_i\|\|c_jH_j\|
+
C\delta^3\\
&\leq
\frac{1}{2}
\left(\sum_{j=1}^M\|c_jH_j\|\right)^2
+
C\delta^3\\
&\leq
\frac{\delta^2}{2}
+
C\delta^3.
\end{split}
\end{equation*}

Since $
\delta\leq {4M\pi}/{n\mu_N}$,
the choice of $n$ implies, after increasing the universal constant in~\eqref{eq: n} if necessary, that
\begin{equation}\label{eq: bound K}
    \frac{\delta^2}{2}+C\delta^3
\leq
\frac{\varepsilon_0}{2n}.
\end{equation}
Combining this estimate with the rounding-error bound gives the result.
\end{proof}

\begin{remark}\label{rm: bound on K}
    Equation~\eqref{eq: bound K} implies that $K \geq 32\sqrt{C}$ is enough.
\end{remark}

The choice of the Trotter parameter $n$ is constrained in two opposite directions.
Making $n$ lower bounded as in equation~\eqref{eq: n} for a given basis $\{\calG_j\}_{j=1}^M$ is easy: simply let $n$ as in equation~\eqref{eq: n}.
Nonetheless, because of bounds in $h_{\max}$, we also need to ensure that
    $n\leq {\varepsilon_0}/{h_{\max} M},$
that is, just asking for a large $n$ is not enough: if, for example, 
\begin{equation*}
    \frac{\varepsilon_0}{h_{\max} M} \leq \frac{\pi^2M^2 K}{\varepsilon_0\mu_N^2},
\end{equation*}
Algorithm~\ref{alg: depth zero} could yield an approximation of $U$ within  a distance larger than $\varepsilon_0$.

A sufficient way to avoid this obstruction is to impose a smallness condition on $h_{\max}$.
In particular, to ensure that 
\begin{equation*}
    n < \frac{\pi^2M^2K}{\varepsilon_0\mu_N^2}+ 1 \leq \frac{\varepsilon_0}{h_{\max} M},
\end{equation*}
where we assume that $\mu_N$ depends only on $d$, we can take
\begin{equation}\label{eq: good basis equation}
    h_{\max} \leq \frac{\varepsilon_0^2\mu_N^2}{\pi^2M^3K+\varepsilon_0\mu_N^2M}.
\end{equation}
This inspires the next definition.

\begin{definition}[Good exponential basis]
    An exponential basis $\{\calG_j\}_{j=1}^M=\{\exp(iH_j)\}_{j=1}^M$ of $\SU(d)$ is called \emph{good} if $h_{\max}$ is bounded as in equation~\eqref{eq: good basis equation}.
\end{definition}

We see that, at least at this level, assuming $\mu_N\sim \Theta(1)$, $h_{\max}\lesssim M^{-3}\sim d^{-6}$, which is very impractical. 
We will discuss in the conclusion strategies for picking slightly larger values of $h_{\max}$ while still guaranteeing convergence, but for now, we note that we can finally state and prove our modification to the SK algorithm, provided we have a good exponential basis.

\begin{algorithm}[H]
\setcounter{AlgoLine}{0}
\caption{Modified Solovay-Kitaev}
\label{alg: modified sk}
\LinesNumbered
\KwIn{A target matrix $U\in \SU(d)$, with $H=\log U$, and a depth parameter $t$.}
\KwRequire{A good exponential basis $\{\calG_j\}_{j=1}^M=\{\exp(iH_j)\}_{j=1}^M$ of $\SU(d)$.}
\KwOut{A circuit approximation of $U$.}

\textbf{function} $\texttt{SOLOVAY KITAEV}(U, t):$

\eIf{$t==0$}{
{
$n\gets 
\left\lceil \frac{\pi^2M^2K}{\varepsilon_0\mu_N^2} \right\rceil$

\Return $\texttt{NET FREE SEARCH}(U, n)$}
}{
$U_{t-1}\gets \texttt{SOLOVAY KITAEV}(U, t-1)$;

$V, W\gets \texttt{COMMUTATOR DECOMPOSITION}(\log(U^\dagger U_{t-1}))$;

$V_{t-1}\gets \texttt{SOLOVAY KITAEV}(V, t-1)$;

$W_{t-1}\gets \texttt{SOLOVAY KITAEV}(W, t-1)$;

\Return $U_{t-1}W^\dagger_{t-1} V^\dagger_{t-1}W_{t-1} V_{t-1}$
}
\end{algorithm}

\begin{theorem}[Modified Solovay-Kitaev]\label{th: modified sk}
    Suppose that $\{\calG_j\}_{j=1}^M$ is a good exponential basis of $\SU(d)$.
    Take $U_1,\dots,U_p$ to be a list of unitaries in $\SU(d)$. 
    Then Algorithm~\ref{alg: modified sk} converges to a circuit approximation of the targets $U_i$, $1\leq i\leq p$, with error at most $\varepsilon$.
    In particular, after an $O(d^6)$-time preprocessing step for the change of basis, the running time is
    $O\left(pd^4\log^{k_t}(1/\varepsilon)\right),$
    so the total time including preprocessing is
    $$
    O(d^6+pd^4\log^{k_t}(1/\varepsilon)).
    $$
    The additional working space, excluding the output circuit, is $O(\poly(d)\log\log(1/\varepsilon))$.
    The circuits' lengths satisfy
    $$
    O\left(\frac{d^2}{\mu}\log^{k_\ell}(1/\varepsilon)\right),
    $$
    where
    $k_t={\log 3}/{\log(3/2)}$ and
    $k_\ell={\log 5}/{\log(3/2)}$.
\end{theorem}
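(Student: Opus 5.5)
The plan is to reduce Theorem~\ref{th: modified sk} to the proof of Theorem~\ref{th: sk full}, with the depth-zero routine now justified by Proposition~\ref{prop: depth zero} instead of a net. The recursion in Algorithm~\ref{alg: modified sk} at depth $t>0$ is identical to Algorithm~\ref{alg: sk}. The contraction argument of Theorem~\ref{th: sk full} uses only two facts: every depth-zero call returns an $\varepsilon_0$-approximation of its input, and $\varepsilon_0<\min\{1/c^2,\delta_0^2/32\pi\}$. Here it does not matter whether the input is the target $U$ or one of the commutator factors $V,W$. So the first step is to verify that \texttt{NET FREE SEARCH}$(U,n)$, with $n=\lceil \pi^2M^2K/(\varepsilon_0\mu_N^2)\rceil$, is an $\varepsilon_0$-base routine for \emph{every} $U\in\SU(d)$.

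For this I invoke Proposition~\ref{prop: depth zero}, which requires two things. The first is $n\geq \pi^2M^2K/(\varepsilon_0\mu_N^2)$, which holds by the ceiling. The second is $h_{\max}\leq \varepsilon_0/(nM)$. Since $n<\pi^2M^2K/(\varepsilon_0\mu_N^2)+1$, it suffices that
\[
h_{\max}\leq \frac{\varepsilon_0}{M\left(\frac{\pi^2M^2K}{\varepsilon_0\mu_N^2}+1\right)}=\frac{\varepsilon_0^2\mu_N^2}{\pi^2M^3K+\varepsilon_0\mu_N^2M},
\]
and this is exactly the good-basis condition~\eqref{eq: good basis equation}. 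Lemma~\ref{lm: new bound} supplies $\|H\|\leq 2\pi$, as used in the proof of the proposition. I expect this to be the main (if mild) obstacle: $n$ must be chosen uniformly in $U$, so that the same good basis works for all targets and all intermediate $V,W$, and the definitions of $\mu$ and $\mu_N$ must be independent of $U$. Both hold because they depend only on the basis. With this in place, the contraction $\varepsilon_t<c\,\varepsilon_{t-1}^{3/2}$ and the choice of $t$ from~\eqref{eq: t} carry over verbatim, giving error at most $\varepsilon$.

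For complexity, I redo the count in the proof of Theorem~\ref{th: sk full} with $T_0$ replaced by the cost of Algorithm~\ref{alg: depth zero}. Precomputing the inverse change-of-basis matrix for $\{H_j\}$ costs $O(M^3)=O(d^6)$ once. After that, each depth-zero call costs the following:
\begin{itemize}
\item an $O(d^3)$ logarithm via diagonalization,
\item an $O(M^2)=O(d^4)$ matrix-vector product for the $\alpha_j$,
\item $O(M)$ roundings,
\item an $O(M)$ straight-line description of $[\prod_j\calG_j^{c_j}]^n$, where the $n$-th power is stored as a repeat node.
\end{itemize}
Each non-base level costs $O(d^3)$ by Lemma~\ref{lm: cd}. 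Since there are $3^t=O(\log^{k_t}(1/\varepsilon))$ leaf calls, each target costs $O(d^4\log^{k_t}(1/\varepsilon))$, and summing over $p$ targets and adding preprocessing gives the stated time. The working space is the recursion stack of depth $t=O(\log\log(1/\varepsilon))$, each frame holding $\poly(d)$ numbers, plus the $O(d^4)$ change-of-basis matrix.

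For length, we still have $\ell_t\leq 5\ell_{t-1}$, so $\ell_t=O(\ell_0\log^{k_\ell}(1/\varepsilon))$, and it remains to bound $\ell_0=n\sum_j|c_j|$. From the proof of Proposition~\ref{prop: depth zero}, $|c_j|\leq \|H\|/(n\mu)+1/2$. This gives $n\sum_j|c_j|\leq M\|H\|/\mu+nM/2$. The first term is $O(d^2/\mu)$. For the second term I would try to absorb it using $nM\leq \varepsilon_0/h_{\max}$ together with $\mu\leq h_{\max}$, which gives $nM\lesssim \varepsilon_0/\mu\lesssim d^2/\mu$. Alternatively, I would note that coefficients with $|\alpha_j|<1/2$ round to zero, so only nonzero $c_j$ contribute, and those satisfy $|c_j|\leq 2|\alpha_j|$. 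Either way $\ell_0=O(d^2/\mu)$, which gives the stated length.
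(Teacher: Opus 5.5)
Your proposal follows the paper's proof. Correctness is inherited from Theorem~\ref{th: sk full} once Proposition~\ref{prop: depth zero} certifies \texttt{NET FREE SEARCH} as an $\varepsilon_0$-base routine, and the bounds come from the same $O(d^6)$ preprocessing, $O(3^t)$ calls of cost $O(d^4)$, and $5^t\ell_0$ length count. Your explicit check that the ceiling choice of $n$ plus~\eqref{eq: good basis equation} gives $h_{\max}\le\varepsilon_0/(nM)$ is correct, as are both of your arguments for $\ell_0=n\sum_j|c_j|=O(d^2/\mu)$, which the paper only asserts. The one step you inherit from the paper without justification is the $O(d^4)$ leaf cost: it omits computing the matrix of $[\prod_j\calG_j^{c_j}]^n$, which the parent needs to form its residual $U^\dagger U_{t-1}$, and for a dense basis this product of $M$ factors naively costs $O(Md^3)=O(d^5)$.
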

\begin{proof}
    Algorithm~\ref{alg: modified sk} is exactly equal to Algorithm~\ref{alg: sk}, except for the subroutine described between lines 3 and 4, therefore, we must only show correctness and complexity of this part.
    Nevertheless, because the exponential basis is assumed to be good, by Proposition~\ref{prop: depth zero}, just like \texttt{NET SEARCH}, these will return an approximation of $U$ with the gates $\mathcal{G}_j$ of error at most $\varepsilon_0$, so correctness will follow.
    For the complexity analysis, the change-of-basis matrix used by Algorithm~\ref{alg: depth zero} is computed once, in time $O(M^3)=O(d^6)$. After this preprocessing, each depth-zero call takes $O(d^3+M^2)=O(d^4)$
    time.
    Since every recursive call generates three subcalls, the recursion tree has $O(3^t)$ nodes.
    Therefore, for $p$ targets, the total running time after preprocessing is
    $$
    O\left(pd^4 3^t\right)
    =
    O\left(pd^4\log^{k_t}(1/\varepsilon)\right),
    $$
    where $k_t=\log(3)/\log(3/2)$. Including the preprocessing step, the total running time is thus
    $$
    O\left(d^6+pd^4\log^{k_t}(1/\varepsilon)\right).
    $$
    Finally, the length of the sequence is given by $5^t\ell_0$, where $\ell_0$ is now the length of a sequence constructed using Algorithm~\ref{alg: depth zero}, namely, of size $n\sum_{j=1}^M|c_j|=O\left(\frac{M}{\mu}\right)$.
\end{proof}

Comparing the asymptotic behavior of $d$ in Theorems~\ref{th: sk full} with that of Theorem \ref{th: modified sk}, we see that the exponential gain in the complexities of time and space is balanced by a potential loss in the final size of the circuits.
The ``potential" here is crucial: Theorem~\ref{th: sk full} uses a \emph{lower-bound} on $\ell_0$ based on the already-mentioned measure-theoretical pressure towards exponentially more gates to form an $\varepsilon_0$-net for $\SU(d)$, but the quality of the instruction sets might force much worse scaling in practice.
There are a few metrics in the literature to measure the quality of nets~\cite{bourgain2012spectral, dankert2009exact, oszmaniec2021epsilon} and, under suitable assumptions on their values, it is possible to get an upper bound scaling of $\ell_0$ on $\poly(d)$ as well~\cite{harrow2002efficient}.
On the other hand, in Theorem~\ref{th: modified sk}, we have an upper bound, but this also assumes a \emph{fixed} $\mu_N$.
That is, $\mu_N$ can be seen as a very simple hyperparameter to measure the quality of a basis set, and, just as in the instruction sets' case, letting it run free could lead to blow-ups. 
The use of condition numbers for characterizing the quality of generating sets is well-known in control theory, but as far as we can tell, its application to the SK setting is new.

\begin{remark}
    Note that Theorem~\ref{th: modified sk} provides a verifiable sufficient condition for checking if an instruction set is universal.
    In fact, in the real-RAM model, given a set of gates $\calG_k$, this condition can be verified by computing the logarithms $H_j$, checking that they form a basis of $\su(d)$, estimating their minimum gain $\mu$, and verifying the required bound on $h_{\max}$.
    On the other hand, this criterion is only sufficient: many universal instruction sets need not form good exponential bases, for example because their logarithms are too large or do not themselves constitute a basis of $\su(d)$. 
\end{remark}

\subsection{Generating a good exponential basis from a net}\label{sec: basis from net}
Theorem~\ref{th: modified sk} says that if we have a good exponential basis, we can skip the whole \texttt{NET SEARCH} subroutine of SK and define an algorithm that is also efficient as a function of $d$.
However, although in some practical cases we might be given a good basis as an instruction set (for example, hardware implementations of the Pauli rotations of equation~\eqref{eq: pauli rot} with small angles are sometimes available~\cite{moflic2026constant, smith2025optimally, vandaele2024optimal}), we cannot always take a general universal set $\{G_1,\dots, G_m\}$ to be good.  
On the other hand, we can always \emph{approximate} a good exponential basis $\{\calG_j\}_{j=1}^M$ using a universal set as a preprocessing step based on the vanilla SK algorithm. 
In this section, we will explore the details of this subroutine.
For such, we will need to define two new concepts.

First, we need to consider the \emph{Frobenius inner product} as an easy-to-compute alternative to the operator norm.
If $A$ and $B$ are two $d\times d$ matrices, their Frobenius inner product is
\begin{equation*}
    \langle A|B \rangle_{\text{F}} = \text{tr}(A^\dagger B).
\end{equation*}
Restricted to the real vector space of Hermitian traceless matrices, this is a real inner product.
It induces the Frobenius norm
\begin{equation*}
    \|A\|_F=\sqrt{\operatorname{tr}(A^\dagger A)},
\end{equation*}
which i equivalent to the operator norm and satisfies
$$
\|A\|\leq \|A\|_F\leq \sqrt{d}\|A\|
$$
for every $d\times d$ matrix $A$.

Second, we consider the \emph{generalized Gell-Mann matrices} in $\su(d)$~\cite{bertlmann2008bloch, georgi2000lie, kimura2003bloch}, a Frobenius orthonormal basis of $\su(d)$.
These matrices are divided into three sectors, the so-called \emph{symmetric sector}
\begin{equation*}
   \Lambda_{k\ell}^{(s)}
=
\frac{1}{\sqrt 2}
\bigl(
\ket{\ell}\bra{k}
+
\ket{k}\bra{\ell}
\bigr)\text{ for } 1\leq k<\ell\leq d,
\end{equation*}
the \emph{antisymmetric} sector
\begin{equation*}
\Lambda_{k\ell}^{(a)}
=
\frac{i}{\sqrt 2}
\bigl(
\ket{\ell}\bra{k}
-
\ket{k}\bra{\ell}
\bigr) \text{ for } 1\leq k<\ell\leq d,
\end{equation*}
and the \emph{Cartan sector},
\begin{equation*}
    \Lambda_\ell^{(c)}
=
\sqrt{\frac{1}{\ell(\ell+1)}}
\left(
\sum_{k=1}^\ell \ket{k}\bra{k}
-
\ell\ket{\ell+1}\bra{\ell+1}
\right)
\text{ for } 1\leq \ell\leq d-1.
\end{equation*}
Besides forming a basis for $\su(d)$, they have large enough $\mu$, as stated in the lemma below.

\begin{lemma}\label{lm: GM}
   Let $\{\Lambda_j\}_{j=1}^M$ denote the generalized Gell-Mann basis defined above.
   Fix a scale $h^*>0$, and set
   $H_j^*=h^*\Lambda_j$.
   Then, as a basis for $\su(d)$, $\{H_j^*\}_{j=1}^M$ has minimum gain
    $$
    \mu^*\geq \frac{h^*}{\sqrt{d}}.
    $$ 
    Consequently, if $h_{\max}^*=\max_j\|H_j^*\|$, then the dimensionless minimum gain satisfies
    $$
    \mu_N^*
    =
    \frac{\mu^*}{h_{\max}^*}
    \geq
    \frac{1}{\sqrt d}.
    $$
\end{lemma}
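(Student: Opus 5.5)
The plan is to compare the operator norm with the Frobenius norm, in which the Gell-Mann basis is orthonormal. Take any nonzero $\alpha\in\R^M$ and set $A=\sum_{j=1}^M\alpha_jH_j^*=h^*\sum_j\alpha_j\Lambda_j$. Since $\{\Lambda_j\}$ is Frobenius orthonormal, Parseval gives
$$\|A\|_F=h^*\|\alpha\|_2\geq h^*\|\alpha\|_\infty.$$
The norm equivalence recalled above, $\|A\|_F\leq\sqrt d\,\|A\|$, then yields $\|A\|\geq h^*\|\alpha\|_\infty/\sqrt d$. Dividing by $\|\alpha\|_\infty$ and taking the infimum in the definition~\eqref{eq: def mu} gives $\mu^*\geq h^*/\sqrt d$. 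The only point to check is that orthonormality holds with respect to the real inner product $\operatorname{Re}\tr(A^\dagger B)=\tr(AB)$ on Hermitian matrices. This amounts to a routine verification across the three sectors: symmetric and antisymmetric elements have disjoint or orthogonal off-diagonal supports, they are orthogonal to the diagonal Cartan elements, and the Cartan elements are mutually orthogonal with unit norm by the choice of normalization $\sqrt{1/(\ell(\ell+1))}$.

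For the consequence, we need an upper bound on $h_{\max}^*=h^*\max_j\|\Lambda_j\|$, and it suffices to show $\|\Lambda_j\|\leq1$ for every $j$. The symmetric and antisymmetric generators act as $2\times2$ blocks with eigenvalues $\pm1/\sqrt2$, so their norm is $1/\sqrt2$. The Cartan generator $\Lambda^{(c)}_\ell$ is diagonal, and its largest entry in absolute value is $\ell/\sqrt{\ell(\ell+1)}=\sqrt{\ell/(\ell+1)}<1$. Hence $h^*_{\max}\leq h^*$, and therefore $\mu_N^*=\mu^*/h^*_{\max}\geq \mu^*/h^*\geq1/\sqrt d$.

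I do not expect a real obstacle here. The only steps that need care are the inner product convention (using the real part of the Frobenius product) and the direction of the inequality $\|\cdot\|_F\leq\sqrt d\|\cdot\|$, which is the one needed. If desired, the bound can be sharpened by replacing $\|\alpha\|_\infty$ with $\|\alpha\|_2$, but that is not required for the statement.
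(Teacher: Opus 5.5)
Your proposal is correct and follows the same route as the paper: bound the operator norm below by $\|\cdot\|_F/\sqrt d$, use Frobenius orthonormality of the Gell-Mann matrices to get $\|\sum_j\alpha_jH_j^*\|_F=h^*\|\alpha\|_2\geq h^*\|\alpha\|_\infty$, and use $\|\Lambda_j\|\leq 1$ to get $h^*_{\max}\leq h^*$. Your explicit checks of orthonormality across the three sectors and of the generators' operator norms ($1/\sqrt2$ and $\sqrt{\ell/(\ell+1)}$) fill in details the paper asserts without verification.
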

\begin{proof}
    For the first part,
    \begin{equation*}
        \mu^* = \inf_{\|\alpha\|_{\infty}=1} \left\|\sum_j \alpha_j H^*_j\right\| \geq \inf_{\|\alpha\|_{\infty} =1}\frac{1}{\sqrt{d}} \left\|\sum_j \alpha_j H^*_j\right\|_{\text{F}} = \inf_{\|\alpha\|_{\infty} =1}\frac{1}{\sqrt{d}} \sqrt{\sum_j |\alpha_j|^2 \|H^*_j\|_F^2}\geq \frac{h^*}{\sqrt{d}}
    \end{equation*}
    where, in the last inequality, we used that $\|\alpha\|_2\geq \|\alpha\|_\infty$ for any vector $\alpha\in\R^M$.

    For the second part, notice that $h_{\max}^*=\max_j\|h^*\Lambda_j\|\leq h^*$, because each Gell-Mann matrix has operator norm at most $1$. Therefore
    $$
    \mu_N^*
    =
    \frac{\mu^*}{h_{\max}^*}
    \geq
    \frac{h^*}{h^*\sqrt{d}}
    =
    \frac{1}{\sqrt d}.
    $$
\end{proof}

Given a universal set $\{G_i\}_{i=1}^m$, our strategy will consist of finding circuits $\{\calG_j\}_{j=1}^M$ that approximate $\{\calG^*_j\}_{j=1}^M=\{\exp(iH^*_j)\}_{j=1}^M$.
Since $\{\calG^*_j\}_{j=1}^M$ is a good exponential basis when we take, for example,
\begin{equation}\label{eq: bound h star}
h_{\max}^*\leq \frac{\varepsilon_0^2}{16d\pi^2M^3K}\leq \frac{\varepsilon_0^2\mu_N^2}{16\pi^2M^3K}\sim O(d^{-7}),    
\end{equation}
$\{\calG_j\}_{j=1}^M$ will also be good if we assume a fine-enough approximation error (Theorem~\ref{th: GM}).
The details are carried out in the demonstration of the next two results.

\begin{theorem}\label{th: GM}
    Let, for $1\leq j \leq M$, $H^*_j = h^*_{\max}\Lambda_j$ for some scalar for $h^*_{\max}$ as in equation~\eqref{eq: bound h star}.
    Suppose that $\{\mathcal{G}_j\}_{j=1}^M$ are unitary approximations of $\{\mathcal{G}_j^*\}_{j=1}^M = \{\exp(i H_j^*)\}_{j=1}^M$ in $\SU(d)$.
    If
    \begin{equation}\label{eq: delta}
     \delta= \max_j\|\mathcal{G}^*_j- \mathcal{G}_j\|\leq \frac{h_{\max}^*}{2\pi d^{5/2}}\sim O(d^{-19/2}),
    \end{equation}
    then 
    \begin{equation}\label{eq: estimation h_max}
        h_{\max}=\max_j\|H_j\|\leq h^*_{\max}\left(1+\frac{1}{2d^{5/2}}\right)\leq \frac{\varepsilon^2_0}{8\pi^2dM^3K},
    \end{equation}
    where $\{H_j\}_{j=1}^M=\{\log \mathcal{G}_j\}_{j=1}^M$, and its dimensionless minimum gain, $\mu_N={\mu}/{h_{\max}}$, is such that $\mu_N  \geq {1}/(2\sqrt{d})$.
    In particular, $\{\mathcal{G}_j\}_{j=1}^M$ is a good exponential basis for $\SU(d)$.
\end{theorem}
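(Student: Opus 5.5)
The plan is to move from the group error $\delta=\max_j\|\mathcal{G}_j^*-\mathcal{G}_j\|$ to an algebra error $\max_j\|H_j-H_j^*\|$. Both $h_{\max}$ and $\mu$ are then perturbations of the Gell-Mann quantities, and goodness follows from equation~\eqref{eq: good basis equation}.

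\textbf{Step 1: logarithm error.} Write $E_j=H_j-H_j^*$. I will show $\|E_j\|\le \pi\delta$. Factor $\mathcal{G}_j=\mathcal{G}_j^*\cdot(\mathcal{G}_j^{*\dagger}\mathcal{G}_j)$, where the second factor $D_j$ satisfies $\|D_j-I\|=\|\mathcal{G}_j-\mathcal{G}_j^*\|\le\delta$ by item~\ref{it: unitary invariance} of Lemma~\ref{lm: properties operator norm}. Since $\delta$ is tiny, $\log D_j$ is the principal logarithm, and item~\ref{it: norm on log} gives $\|\log D_j\|\le \frac{\pi}{2}\delta$.

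Next I combine $\exp(iH_j^*)\exp(i\log D_j)$ through BCH. The leading term is $H_j^*+\log D_j$. The commutator term is bounded by $\|H_j^*\|\,\|\log D_j\|\le h^*_{\max}\pi\delta/2$. The refined remainder estimate of Remark~\ref{rm: bch}, applied in the form $X_1=H+E$, $X_2=H$, or simply the cubic bound $C(h^*_{\max}+\pi\delta/2)^2\pi\delta/2$, contributes only an $O(\delta\, h^{*2}_{\max})$ correction. Because $h^*_{\max}$ is extremely small by~\eqref{eq: bound h star}, these corrections are absorbed, and I get $\|E_j\|\le \pi\delta\le h^*_{\max}/(2d^{5/2})$ using~\eqref{eq: delta}.

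I must also check that this $\log$ is the one the paper's convention selects. Since $\|H_j^*+E_j\|\le\pi$, it agrees with the principal branch by the remark after Lemma~\ref{lm: new bound}, and it is traceless. Therefore $H_j=\log\mathcal{G}_j$ and $h_{\max}\le h^*_{\max}(1+\tfrac{1}{2d^{5/2}})$. The second inequality in~\eqref{eq: estimation h_max} then follows from~\eqref{eq: bound h star}, because $1+\tfrac{1}{2d^{5/2}}\le 2$.

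\textbf{Step 2: minimum gain.} Take $\alpha$ with $\|\alpha\|_\infty=1$. The reverse triangle inequality gives
$\|\sum_j\alpha_jH_j\|\ge\|\sum_j\alpha_jH_j^*\|-\sum_j\|E_j\|\ge \mu^*-M\max_j\|E_j\|$.
Lemma~\ref{lm: GM} gives $\mu^*\ge h^*_{\max}/\sqrt d$, and Step~1 gives $M\max_j\|E_j\|\le d^2\, h^*_{\max}/(2d^{5/2})=h^*_{\max}/(2\sqrt d)$. Hence $\mu\ge h^*_{\max}/(2\sqrt d)$.

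For the normalized gain, dividing by the Step~1 bound on $h_{\max}$ yields $\mu_N\ge \frac{1}{2\sqrt d}\,(1+\tfrac{1}{2d^{5/2}})^{-1}$. This falls slightly short of the claimed $\mu_N\ge 1/(2\sqrt d)$. To close the gap I would use slack in Lemma~\ref{lm: GM}: the inequality $\|\alpha\|_2\ge\|\alpha\|_\infty$ is loose, since $\|\alpha\|_2$ typically exceeds $\|\alpha\|_\infty$. Alternatively, I would tighten Step~1 to $\|E_j\|\le \frac{\pi}{2}\delta(1+o(1))$, which roughly halves the perturbation term. \textbf{I expect this constant bookkeeping, together with the branch issue for the logarithm, to be the main obstacle.}

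\textbf{Step 3: goodness.} Substitute $\mu_N^2\ge 1/(4d)$ into~\eqref{eq: good basis equation}. The right-hand side is then at least
$\frac{\varepsilon_0^2/(4d)}{\pi^2M^3K+\varepsilon_0M/(4d)}\ge\frac{\varepsilon_0^2}{8\pi^2dM^3K}$,
because $\varepsilon_0 M/(4d)\le \pi^2M^3K$. By~\eqref{eq: estimation h_max}, $h_{\max}$ is at most this quantity, so $\{\mathcal{G}_j\}_{j=1}^M$ is a good exponential basis.
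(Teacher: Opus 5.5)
Your proof is correct in substance once you adopt your fix (b). However, your Step 1 runs in the opposite direction from the paper's.

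\textbf{Step 1: paper's route versus yours.} The paper first shows, by an eigenvalue--trace count, that $\log\calG_j$ is the principal logarithm. It then derives the crude a priori bound $\|E_j\|\le 5h^*_{\max}$. Finally it expands $\Omega_j=\log(\calG_j\calG_j^{*\dagger})$ by BCH as $E_j-\tfrac12[E_j,H_j^*]+R_j$; Remark~\ref{rm: bch} and the crude bound give $\|\Omega_j\|\ge\tfrac12\|E_j\|$, hence $\|E_j\|\le 2\|\Omega_j\|\le\pi\delta$. You instead expand $\log(\calG_j^*D_j)$ forwards and bound $E_j$ from above directly. This needs no a priori bound and no inversion. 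The identification of $\log\calG_j$ comes for free, since the BCH output is small and traceless. You also gain a factor of two: $\|E_j\|\le\tfrac{\pi}{2}\delta(1+O(h^*_{\max}))$.

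\textbf{Two points in Step 1 need repair.} First, the plain cubic bound of Lemma~\ref{lm: bch} is $C(h^*_{\max}+\tfrac{\pi}{2}\delta)^3$, not $C(h^*_{\max}+\tfrac{\pi}{2}\delta)^2\tfrac{\pi}{2}\delta$. Also, Remark~\ref{rm: bch} is stated for a different pair of arguments. You have two options.
\begin{itemize}
\item Justify the estimate linear in $\|\log D_j\|$: every BCH term of degree at least $3$ contains $\log D_j$, which is the same fact behind the Remark.
\item Use the cubic bound and absorb the additive $C(h^*_{\max})^3$ against $h^*_{\max}/(4d^{5/2})$ via the smallness in~\eqref{eq: bound h star}. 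The paper uses the same kind of smallness to make $h^*_{\max}+36C(h^*_{\max})^2\le\tfrac12$. This route gives $\|E_j\|\le\tfrac{h^*_{\max}}{4d^{5/2}}(1+o(1))$, which is all you need, though not literally $\|E_j\|\le\pi\delta$.
\end{itemize}
Second, item~\ref{it: norm on log} presupposes $\|\log D_j\|\le\pi$. Establishing this needs the paper's trace argument: the principal eigenvalues satisfy $|\tilde\theta|<2\pi/d$, so the trace, a multiple of $2\pi$, vanishes.

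\textbf{Step 2.} The shortfall comes from rounding $M$ up to $d^2$. Keep $M=d^2-1$; then $\mu\ge h^*_{\max}\bigl(\tfrac{1}{2\sqrt d}+\tfrac{1}{2d^{5/2}}\bigr)$. Dividing by $1+\tfrac{1}{2d^{5/2}}$ still leaves $\mu_N\ge 1/(2\sqrt d)$, because $\tfrac{1}{2d^{5/2}}\ge\tfrac{1}{4d^3}$. The paper arranges the same computation differently: it expands $(\mu^*-M\delta')/(h^*_{\max}+\delta')$, uses $\mu^*\le h^*_{\max}$, and is left with the correction $\tfrac{\delta'}{h^*_{\max}}(1+M)=\tfrac{\delta'}{h^*_{\max}}d^2\le\tfrac{1}{2\sqrt d}$.

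Your fix (b) also works, and it is already contained in your Step 1 estimate before you round up to $\pi\delta$. Fix (a) does not work. The infimum of $\|\alpha\|_2$ over $\|\alpha\|_\infty=1$ is exactly $1$, attained at coordinate vectors, so that inequality in Lemma~\ref{lm: GM} has no slack.

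\textbf{Step 3} agrees with the paper, and you make explicit the monotonicity in $\mu_N^2$ that the paper leaves implicit.
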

\begin{proof}
  We start showing that $h_{\max}\leq \pi$, so item~\ref{it: norm on log} of Lemma~\ref{lm: properties operator norm} applies.
  In particular, letting $\widetilde{H}_j$ be the principal branch logarithms of $\calG_j$ with eigenvalues $\widetilde{\theta_i}$ as in the proof of Lemma~\ref{lm: new bound}, we note that,
  \begin{equation*}
      2\left|\sin\left(\frac{\widetilde{\theta_i}}{2}\right)\right| = |e^{i\widetilde{\theta_i}}-1|\leq \|\calG_j-1\|\leq \|\mathcal{G}_j-\mathcal{G}^*_j\|+\|\mathcal{G}^*_j-I\| \leq \delta+h^*_{\max} < \frac{4}{d}\leq 2 \sin\left(\frac{\pi}{d}\right),
  \end{equation*}
  where we again used concavity of sine. 
  Because $|\widetilde{\theta_j}|\leq \pi$, this gives
  \begin{equation}\label{eq: bound on 2pi d}
      |\widetilde{\theta_i}|< \frac{2\pi}{d}
  \end{equation}
  where $d\geq 2$.
  This means that
  \begin{equation*}
      \left|\sum_{i=1}^d\widetilde{\theta_i}\right|\leq \sum_{i=1}^d|\widetilde{\theta_i}|< d \frac{2\pi}{d}\leq 2\pi,
  \end{equation*}
  which, together with equation~\eqref{eq: mod eigenvalues}, implies that $\widetilde{H}_j$ is traceless for all $1\leq j\leq M$, and $\widetilde{H}_j=H_j$.
  
  Consequently, by item~\ref{it: norm on log} and \ref{it: exponent approximation}
    \begin{equation}\label{eq: bad approximation}
        h_{\max} \leq \frac{\pi}{2}\max_j \{\|\mathcal{G}_j-I\|\}\leq  \frac{\pi}{2} \max_j \{\|\mathcal{G}_j-\mathcal{G}^*_j\|+\|\mathcal{G}^*_j-I\| \}\leq \frac{\pi}{2}(\delta+h^*_{\max})\leq \pi h_{\max}^*,
    \end{equation}
    but, this level of approximation is too weak for our purposes.
    Instead, let $E_j=H_j-H_j^*$, where $\delta'=\max_j\|E_j\|$, and so, by equation~\eqref{eq: bad approximation},
    \begin{equation}\label{eq: bad approx delta prime}
      \delta' = \max_j\|H_j-H_j^*\| \leq
    h_{\max}+h_{\max}^* \leq (\pi+1)h_{\max}^* \leq 5h_{\max}^*.
    \end{equation}
    
    Now set $\Omega_j = \log(\mathcal G_j(\mathcal G_j^*)^\dagger)$; note that 
    \begin{equation*}
        \|\exp(i\Omega_j)-I\|\leq \|G_j(\mathcal G_j^*)^\dagger-I\|\leq \|(G_j-\mathcal G_j^*)(\mathcal G_j^*)^\dagger\mathcal\|=\|G_j-\mathcal G_j^*\|\leq \delta 
    \end{equation*}
    and the same argument of equation~\eqref{eq: bound on 2pi d} applies.
    In particular, $\|\Omega_j\| < \pi$ and, again, by item~\ref{it: norm on log},
    \begin{equation}\label{eq: bound on omega}
      \|\Omega_j\| \leq \frac{\pi}{2} \|\mathcal G_j(\mathcal G_j^*)^\dagger-I\| = \frac{\pi}{2} \|\mathcal G_j-\mathcal G_j^*\| \leq \frac{\pi}{2}\delta.  
    \end{equation}
    On the other hand,
    $$
    \calG_j(\calG_j^*)^\dagger =  e^{iH_j}e^{-iH_j^*}  = e^{i(H_j^*+E_j)}e^{-iH_j^*}.
    $$
    Since
    $$\|H_j\|+\|H_j^*\|\leq \pi h^*_{\max} + h^*_{\max} \leq \varepsilon_0\leq\delta_0,$$ 
    BCH applies; and
    \begin{equation*}
    \begin{split}
    \|\Omega_j\| &= \left\|E_j + \frac{1}{2}[(H_j^*+E_j),-H_j^*] + R_j\right\|\\
    &= \left\|E_j -\frac{1}{2}[E_j,H_j^*] + R_j\right\|\\
    &\geq \|E_j\|-h_{\max}^*\|E_j\|-C(h_{\max}^*+\|E_j\|)^2\|E_j\|,
    \end{split}
    \end{equation*}
    by Remark~\ref{rm: bch}. 
    Using once more the crude bound of equation~\eqref{eq: bad approx delta prime}, this gives
    $$
    \|\Omega_j\|\geq  \|E_j\|-\|E_j\| h_{\max}^*- 36(h_{\max}^*)^2 \|E_j\| = \left(1-h_{\max}^*-36C(h_{\max}^*)^2\right)\|E_j\|.
    $$
    By choosing $h_{\max}^*$ sufficiently small, the term $h_{\max}^*+36C(h_{\max}^*)^2$ is at most $1/2$. Hence,
    \begin{equation*}
       \delta'=\max_j\|E_j\| \leq 2\|\Omega_j\|\leq \pi \delta\leq \frac{h_{\max}^*}{2d^{5/2}},
    \end{equation*}
    by equation~\eqref{eq: bound on omega}.
    So,
    \begin{equation*}
        h_{\max}\leq h^*_{\max} + \delta' =h^*_{\max}\left(1+\frac{1}{2d^{5/2}}\right).
    \end{equation*}

    Moving to the minimum gain, we note that
    \begin{equation*}
    \mu
    =
    \inf_{\|\alpha\|_\infty=1}
    \left\|\sum_j\alpha_jH_j\right\|
    \geq
    \inf_{\|\alpha\|_\infty=1}
    \left\|\sum_j\alpha_jH_j^*\right\|
    -
    M\delta'
    \geq
    \mu^*-M\delta'.
    \end{equation*}
    The definition of dimensionless gain therefore gives
    \begin{equation*}
    \begin{split}
        \mu_N &= \frac{\mu}{h_{\max}}\\
        &\geq \frac{\mu^*-M\delta'}{h^*_{\max} + \delta'}\\
        & = \frac{\mu^*}{h^*_{\max} + \delta'} - \frac{M\delta'}{h^*_{\max} + \delta'}\\
        &\geq \frac{\mu^*}{h^*_{\max}}-\frac{\delta'}{h^*_{\max}}\left(\frac{\mu^*}{h^*_{\max}}+M\right)\\
        & \geq \frac{1}{\sqrt{d}} -\frac{\delta'}{h^*_{\max}}\left(1+M\right)\\
        &\geq \frac{1}{\sqrt{d}} -\frac{d^2}{2d^{5/2}}\\
        &\geq \frac{1}{2\sqrt{d}},
    \end{split}
    \end{equation*}
    where, in the fourth line, we used $\frac{1}{1+x}\geq 1-x$ for all $x\in \R_+$ and $\delta'\geq 0$, and Lemma~\ref{lm: GM} and $\mu^*\leq h^*_{\max}$ in the fifth.
    Since $\mu_N>0$, by Lemma~\ref{lm: mu larger than zero}, $\{H_j\}_{j=1}^M$ is a basis for $\su(d)$.
    Moreover, that $\{\calG_j\}_{j=1}^M$ forms a good exponential basis follows from equation~\eqref{eq: good basis equation} with the estimated values of $h_{\max}$ and $\mu_N$.
\end{proof}

\begin{corollary}\label{col: sk}
Suppose that $\{G_1,\dots,G_m\}$ is a universal instruction set for $\SU(d)$, and that words of length at most $\ell_0$ over this instruction set form an $\varepsilon_0$-net for $\SU(d)$.
Let $T_0$ and $S_0$ denote, respectively, the time and space costs of the corresponding depth-zero net-search routine as in Theorem~\ref{th: sk full}.

Take $U_1,\dots,U_p$ to be a list of unitaries in $\SU(d)$. 
Then, after a preprocessing step based on the usual Solovay-Kitaev algorithm, one can construct a circuit over $\{G_1,\dots,G_m\}$ approximating each $U_i$, $1\leq i\leq p$, up to error at most $\varepsilon$ in total time
$$
O( d^2(T_0+d^3)\log^{k_t} d + p d^4\log^{k_t} (1/\varepsilon)),
$$
using $O(S_0+d^4)$ additional working space excluding the storage needed for the output circuits. 
The expanded circuit lengths are
$$
O\left(\ell_0 d^{19/2} \log^{k_\ell}(d)
\log^{k_\ell}(1/\varepsilon) \right),$$
where $k_t={\log 3}/{\log(3/2)}$ and $k_\ell={\log 5}/{\log(3/2)}$.
\end{corollary}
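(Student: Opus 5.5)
The proof has two phases: a preprocessing phase that builds a good exponential basis out of the original gates, and an online phase that runs Algorithm~\ref{alg: modified sk} on that basis.

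\textbf{Preprocessing.} Fix $h^*_{\max}$ as in equation~\eqref{eq: bound h star}, so $h^*_{\max}=\Theta(\varepsilon_0^2 d^{-7})$. Form the $M=d^2-1$ target unitaries $\calG_j^*=\exp(iH_j^*)$ with $H_j^*=h^*_{\max}\Lambda_j$. Run the vanilla Algorithm~\ref{alg: sk} (Theorem~\ref{th: sk full}) on each of these $M$ targets with accuracy
\begin{equation*}
\delta=\frac{h^*_{\max}}{2\pi d^{5/2}}=\Theta(d^{-19/2}),
\end{equation*}
taking $\varepsilon_0$ as a fixed constant. This yields circuits $\calG_j$ over $\{G_i\}$ with $\|\calG_j-\calG_j^*\|\leq\delta$. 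Theorem~\ref{th: GM} then certifies that $\{\calG_j\}_{j=1}^M$ is a good exponential basis, with $h_{\max}$ and $\mu_N\geq 1/(2\sqrt d)$ as stated there.

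\textbf{Online phase.} Invoke Theorem~\ref{th: modified sk} with this basis. Each gate $\calG_j$ is treated as a subroutine in the straight-line program, and every depth-zero output is expanded back into words over $\{G_i\}$.

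\textbf{Time.} Theorem~\ref{th: sk full} applied to $p=M$ targets at accuracy $\delta$ costs
\begin{equation*}
O\bigl((d^3+T_0)\,M\log^{k_t}(1/\delta)\bigr).
\end{equation*}
Since $\log(1/\delta)=O(\log d)$ for fixed $\varepsilon_0$, this is $O(d^2(T_0+d^3)\log^{k_t}d)$. The $O(d^6)$ change-of-basis precomputation must also be added. I expect it can be absorbed, since $d^2\cdot d^3\log^{k_t}d$ is smaller than $d^6$ only by a $\log$ factor, and $T_0=2^{\Omega(d^2)}$ dominates anyway. If it cannot be absorbed, the bound should be stated with $d^6$ added explicitly. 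The online cost $O(pd^4\log^{k_t}(1/\varepsilon))$ comes directly from Theorem~\ref{th: modified sk}.

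\textbf{Space.} The vanilla SK run contributes $O(S_0+\poly(d)\log\log d)$. The change-of-basis matrix and the $M$ stored matrices $\calG_j$ (or their straight-line representations) contribute $O(d^4)$. Together these give $O(S_0+d^4)$, with the $\log\log$ terms absorbed.

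\textbf{Circuit length.} Each $\calG_j$ expands to $O(\ell_0\log^{k_\ell}(1/\delta))=O(\ell_0\log^{k_\ell}d)$ original gates. A modified-SK output has $5^t\ell_0'$ basis letters, where $\ell_0'=n\sum_j|c_j|$ is the depth-zero length. This is the main obstacle. Theorem~\ref{th: modified sk} asserts $\ell_0'=O(M/\mu)$, but here $\mu\approx\mu_N h_{\max}\gtrsim h^*_{\max}/(2\sqrt d)$. A naive substitution therefore gives
\begin{equation*}
\ell_0'=O\bigl(d^2\cdot d^{1/2}\cdot d^{7}\bigr)=O(d^{19/2}),
\end{equation*}
which matches the claimed $d^{19/2}$. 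I would verify this directly. Using $\|H\|\leq 2\pi$ and $|c_j|\leq 2\|H\|/(n\mu)$, we get $n\sum_j|c_j|\leq 4\pi M/\mu$, which is independent of $n$. With $\mu\geq h^*_{\max}/(2\sqrt d)=\Omega(d^{-15/2})$, this gives $O(d^{19/2})$ basis letters. Multiplying by $5^t=O(\log^{k_\ell}(1/\varepsilon))$ and by the per-letter expansion $O(\ell_0\log^{k_\ell}d)$ yields the stated bound.

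The step I would check most carefully is the regime where $|\alpha_j|<1/2$. There rounding gives $c_j=0$, which only helps the length count. The bound $|c_j|\leq 2\|H\|/(n\mu)$ is used only for nonzero $c_j$, for which $|\alpha_j|\geq 1/2$, so the estimate is consistent.
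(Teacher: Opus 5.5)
Your proposal follows the paper's proof essentially step for step. You run vanilla SK on the $M$ scaled Gell-Mann exponentials at accuracy $\delta=\Theta(d^{-19/2})$, which gives the $d^2(T_0+d^3)\log^{k_t}d$ term and $O(\ell_0\log^{k_\ell}d)$ gates per basis letter. You then use Theorem~\ref{th: GM} to certify goodness, run Theorem~\ref{th: modified sk} online with the $d^6$ change-of-basis cost absorbed into the $T_0$ term, and obtain the length $5^t\cdot O(M/\mu)\cdot O(\ell_0\log^{k_\ell}d)$ with $\mu=\mu_N h_{\max}=\Omega(d^{-15/2})$; your check $n\sum_j|c_j|\le 4\pi M/\mu$ just spells out what Theorem~\ref{th: modified sk} asserts.

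Two minor remarks. First, $d^5\log^{k_t}d$ is smaller than $d^6$ by a polynomial factor, not a logarithmic one, so absorbing $d^6$ genuinely relies on $T_0$ being large, which is exactly what the paper assumes. Second, your space count, like the paper's, omits the $O(\poly(d)\log\log(1/\varepsilon))$ recursion stack of the online phase from Theorem~\ref{th: modified sk}, and that term is not covered by $O(S_0+d^4)$ as $\varepsilon\to 0$.
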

\begin{proof}
    By the standard Solovay-Kitaev (Theorem~\ref{th: sk full}), approximations $\{\calG_j\}_{j=1}^M$ of the unitaries $\{\calG^*_j\}_{j=1}^M$ of Theorem~\ref{th: GM} in terms of the gates $\{G_1,\dots, G_m\}$ with error $\delta$ bounded as equation~\eqref{eq: delta} can be computed in time
    \begin{equation*}
        O((T_0+d^3)\log^{k_t}(1/\delta))=O\left((T_0+d^3)\log^{k_t}(d^{19/2})\right)=O((T_0+d^3)\log^{k_t}(d)),
    \end{equation*}
    for each $1\leq j \leq M$.
    Moreover, each such approximation has, at most,
    $$
    O(\ell_0\log^{k_\ell}(d))
    $$
    many gates.
    But, by Theorem~\ref{th: GM}, $\{\calG_j\}_{j=1}^M$ is a good exponential basis of $\SU(d)$, so Theorem~\ref{th: modified sk} applies.
    The total time to approximate the input targets is
    \begin{equation*}
        O(d^6+p d^4\log^{k_t}(1/\varepsilon)),
    \end{equation*}
    where the $d^6$ preprocessing step of trotterization might as well be absorbed into the $T_0$ SK cost.
    Assuming, 
    $$
    h_{\max}^* = \frac{\varepsilon_0^2}{16d\pi^2M^3K} = \Theta(d^{-7}),
    $$
    the total lengths of the output circuits, in terms of the original universal gates, are
    \begin{equation*}
    \begin{split}
         O\left(\frac{d^2\ell_0\log^{k_\ell}(d)}{\mu}\log^{k_\ell}(1/\varepsilon)\right)&=O\left(\frac{d^2\ell_0\log^{k_\ell}(d)}{\mu_Nh_{\max}}\log^{k_\ell}(1/\varepsilon)\right) \\
         &= O(\ell_0 d^{19/2}\log^{k_\ell}(d)\log^{k_\ell}(1/\varepsilon)),
    \end{split}
    \end{equation*}
    where we used the estimation of $\mu_N$ and $h_{\max}\geq h^*_{\max}-\delta'\sim h^*_{\max}$ in Theorem~\ref{th: GM}.
\end{proof}

\section{Conclusion}\label{sec: conclusion}
Table~\ref{tb: algorithms} compares the three versions of SK discussed in this paper.
In terms of execution time and dedicated space only, the good basis solution of Algorithm~\ref{alg: modified sk} and the general method of Corollary~\ref{col: sk} are preferable to Algorithm~\ref{alg: sk}, especially when $p$ is large.
Yet, it should be noted that in the case of Corollary~\ref{col: sk}, although moved to a $p$-independent preprocessing step, the linear scaling with $T_0$ and, consequently, the exponential dependence on $d^2$, is still present.
Hence, a natural extension of our approach consists of searching for an alternative to the construction of a good exponential basis $\{\calG_j\}_{j=1}^M$ from the instruction set $\{G_1,\dots, G_m\}$ that does not depend on the original SK algorithm or any other explicit enumeration of the net's nodes.
We do not know of any existing algorithm that achieves this and, again quoting \cite{kitaev2002classical}, this seems to be a rather complicated problem. 
Still, we conjecture that, if $\ell_\delta$ is the minimum circuit length such that the set of words on $\{G_1,\dots, G_m\}$ forms a $\mathcal{N}_{\delta}$-net in $\SU(d)$, there exists an algorithm that constructs a good exponential basis in time polynomial in the net depth $\ell_{\delta}$ and in $d$, without enumerating the exponentially large set of words of length at most $\ell_{\delta}$.
In particular, for $d$-indexed families of instruction gates such that $\ell_\delta=O(\poly(d, 1/\delta))$, Corollary~\ref{col: sk} can be modified to imply on a $O(\poly(d, \log(1/\varepsilon)))$ solution to gate synthesis.

\begin{table}
\renewcommand{\arraystretch}{1.5}
\begin{tabular}{c c c c}
\hline
Method& Constraints  & Time complexity & $\ell$ \\  
& on instructions & & \\
\hline
Algorithm~\ref{alg: sk} & None & ${O}((d^3+T_0)p\log^{k_t}(1/\varepsilon))$ & $O(\ell_0\log^{k_\ell}(1/\varepsilon))$\\
 Algorithm~\ref{alg: modified sk} & Good exponential basis & $  O(d^6+pd^4\log^{k_t}(1/\varepsilon))$ & $O\left(\frac{d^2}{\mu}\log^{k_\ell}(1/\varepsilon)\right)$\\
 Corollary~\ref{col: sk} & None & $\widetilde{O}( d^2(T_0+d^3) + p d^4\log^{k_t} (1/\varepsilon))$ & $\widetilde{O}\left(\ell_0 d^{19/2}
\log^{k_\ell}(1/\varepsilon) \right)$\\
 \hline
\end{tabular}
\caption{Comparison between the three distinct modifications of the SK algorithm discussed in this paper. Here, $d$ is the dimension of the qudits, $\varepsilon$ the target error level, $k_t={\log 3}/{\log(3/2)}$, $k_\ell={\log 5}/{\log(3/2)}$, $p$ the number of unitaries being approximated, $\mu$ the minimum gain of the logarithms of a good exponential basis, and $T_0$ the time to query an $\varepsilon_0$-net. The symbol $\widetilde{O}$ is used to hide polylogarithmic factors on $d$.}\label{tb: algorithms}
\end{table}

Still, even if this net free solution exists, the parameter $\ell$ is now severely penalized (as a function of the dimension) in comparison with the usual SK.
Although we used some very coarse approximations to estimate it, it should be noted that $\ell$'s bad dependence on the dimension comes from the term 
\begin{equation*}
    n\sum_{j=1}^M|c_j|\leq M n\max_j |c_j|=O\left(\frac{d^2}{\mu}\right)=O\left(\frac{d^2}{h_{\max}\mu_N}\right)
\end{equation*}
which measures the number of gates due to the trotterization step.
In the present construction, the dimensionless gain $\mu_N^{-1}$ contributes only with a factor of $O(d^{1/2})$, whereas the dominant cost comes from $h^{-1}_{\max}$, which scales as $\Theta(d^{7})$.

The dependence of $h_{\max}$ on the dimension can, in principle, be amortized through the Suzuki formulas.
These alternative integration methods for the one-parameter subgroup use higher order terms to asymptotically decrease the integration error, with (sometimes) only an acceptable cost in the total circuit length. 
For example, the Suzuki formula at level $k=1$, also known as \emph{Strang's splitting}~\cite{strang1968construction}, uses second order cancellations to improve the integration error rate and, consequently, the value of $n$ by a factor of square root, with an extra price paid in doubling the number of applied gates.
The formulas are recursively defined in $k$ and improve the product-formula error at the cost of increasing the number of exponentials per step, typically exponentially in the order. 
Still, however, the bounds on the formulas' remainders usually follow from BCH, so one needs to ensure that its radius of convergence is respected, that is
\begin{equation*}
    \delta = \sum_{j=1}^M \|c_jH_j\|\leq M\max_j|c_j|h_{\max}\leq \delta_0. 
\end{equation*}
This suggests a product-formula barrier around $h_{\max}\sim M^{-2}\sim d^{-4}$, which would still leave a length contribution of order $d^{6}$ up to conditioning factors.

Moreover, in our case, besides the usual integration errors, there is also drift.
One could, for example, introduce a \emph{scheduled} integration scheme.
Formally, this means that the coefficients $c_i$ are now allowed to adjust at each integration step, so that the drift error never increases beyond some constant bound.
Using the pushforward of the right and left actions, one may describe the problem of finding an optimal schedule as an ordinary closest vector problem in the lattice at each step.
Although an optimal schedule could be, in general, \texttt{NP}-hard~\cite{micciancio2002complexity, vanemdeboas1981another}, efficiently computing sufficiently good approximated solutions, in which the drift error never increases beyond a certain constant limit, may be possible.
We leave this for future work.

Finally, our method uses a bi-invariant metric because its geodesics are explicit and compatible with product formulas.
Other geometries, including Nielsen-type right-invariant or even Finsler metrics, may better capture gate count but are harder to integrate algorithmically.
Designing efficiently integrable metrics whose lengths correlate with discrete gate complexity is an interesting open direction.

\bibliographystyle{plain}
\bibliography{net}

\end{document}